\documentclass[11pt]{article}
\usepackage{amsmath,amsthm}
\usepackage{fullpage,palatino,setspace}
\usepackage{comment}
\usepackage{graphicx}
\usepackage{enumitem}
\usepackage{todonotes}
\usepackage{subfigure}
\usepackage[ruled,vlined]{algorithm2e}
\usepackage{amssymb,amsfonts}
\usepackage[colorlinks=true, linkcolor=blue, citecolor=blue, urlcolor=blue]{hyperref}
\usepackage[capitalise]{cleveref}     
\usepackage{tikz}
\usepackage{complexity}
\usepackage[all]{xy}
\usetikzlibrary{backgrounds}

\newtheorem{theorem}{Theorem}[section]
\newtheorem{lemma}[theorem]{Lemma}

\newtheorem{claim}[theorem]{Claim}
\newtheorem{corollary}[theorem]{Corollary}

\newcommand{\ACz}{\mathsf{AC^0}}

\theoremstyle{remark}

\theoremstyle{definition}

\usepackage{thmtools}
\usepackage{thm-restate}
\usepackage{enumitem}
\usepackage[numbers,sort&compress]{natbib} 

\newcommand*{\ldblbrace}{\{\mskip-5mu\{}
\newcommand*{\rdblbrace}{\}\mskip-5mu\}}

\newcommand{\MCVP}{{\sf MCVP}}

\newcommand{\ktintest}{\texorpdfstring{$(k + 1)$-\textnormal{\textsc{Tinhofer Testing}}}{(k+1)-Tinhofer Testing}\xspace}

\title{A Hierarchy of Tinhofer Graphs: Separations and Membership Testing}
\author{Sutanay Bhattacharjee\thanks{Indian Institute of Technology Madras, Chennai, India.} \and Ameya Panse\thanks{Part of this work was done when the second author was a masters student at IIT Madras}
\and Jayalal Sarma$^*$}
\date{}

\usepackage{collect}

\def\movetoappendix{1}

\definecollection{appendix}
\makeatletter
\newenvironment{aproof}[2]
  { \@nameuse{collect}{appendix}
  { \subsection{#1} \label{#2} \begin{proof} } {\end{proof}}
  }{\@nameuse{endcollect}}
\makeatother

\makeatletter
\newenvironment{appsection}[2]
  { \@nameuse{collect}{appendix}
  { \subsection{#1} \label{#2} }
  {}
  }{\@nameuse{endcollect}}
\makeatother

\ifthenelse{\equal{\movetoappendix}{0}}{

}{}

\usepackage{todonotes}
\newcounter{todocounter}

\begin{document}
\maketitle
\begin{abstract}
Color refinement is an important technique that works very well in practice for the graph isomorphism problem. Tinhofer graphs are the class of graphs for which refinement together with individualization correctly tests graph isomorphism against every other graph, irrespective of the choices of vertices made during individualization. Motivated by the fact that Tinhofer graphs form a natural boundary for efficient graph isomorphism tests based on color refinement, in this paper, we introduce a hierarchy of graph classes within the class of Tinhofer graphs. We call a graph $G$ \emph{$k$-Tinhofer} if, after $k$ rounds of individualization and refinement, the resulting colored graphs remain isomorphic for every graph $H \cong G$, irrespective of the choices of vertices made during individualization.

Arvind {\em et al.}~(2017) studied a hierarchy of graph classes motivated by color refinement - discrete, amenable, Tinhofer, and refinable graphs. We show that the $k$-Tinhofer hierarchy lies between the class of all graphs and Tinhofer graphs, with refinable graphs coinciding with the first level of the hierarchy. We establish the following structural and algorithmic results about this hierarchy:
\begin{itemize}
    \item We obtain two characterizations of $k$-Tinhofer graphs: an algebraic characterization in terms of orbit partitions induced by pointwise stabilizers of automorphism groups, and a combinatorial characterization in terms of individualization-refinement trees and quotient graphs. 
    \item For every fixed integer $k \ge 0$, there exist vertex-colored graphs that are $k$-Tinhofer but not $(k + 1)$-Tinhofer.
    \item For every fixed integer $k \ge 0$, the problem of deciding whether a given $k$-Tinhofer graph is ($k + 1$)-Tinhofer is $\P$-hard under uniform $\ACz$ many-one reductions.
    \item We show that testing isomorphism between an $(n - k)$-Tinhofer graph $G$ and an arbitrary graph $H$ is fixed-parameter tractable with respect to the parameter $k$.
\end{itemize}
\end{abstract}

\newpage
\tableofcontents

\section{Introduction}
\label{sec:intro}

Graph isomorphism is the problem of determining whether two given graphs $G$ and $H$ are isomorphic, that is, whether there exists a bijection $\sigma : V(G) \to V(H)$ such that for all $(u,v) \in V(G) \times V(G)$, $(u,v) \in E(G)$ if and only if $(\sigma(u),\sigma(v)) \in E(H)$. This problem has been widely explored over the past several decades and is significant from both a theoretical and practical perspective. It is known to be in \NP, but is not known to be $\NP$-complete. The complexity status of graph isomorphism is intriguing because it is known that if the problem is $\NP$-complete, then the polynomial hierarchy collapses to the second level. In a major breakthrough, Babai~\cite{Bab16} gave a quasi-polynomial time algorithm for graph isomorphism running in time $2^{(\log {n})^{O(1)}}$, which was later refined to $2^{O((\log n)^3)}$ by Helfgott {\em et al.}~\cite{HHBD17}. Moreover, polynomial-time algorithms are known for several restricted graph classes, including planar graphs~\cite{HW74}, graphs of bounded degree~\cite{Luk82}, graphs of bounded treewidth~\cite{Bod90}, and graphs with bounded eigenvalue multiplicities~\cite{Bab82}. A well-known variant of graph isomorphism is the Colored Graph Isomorphism (CGI) problem, which asks whether two vertex-colored graphs $G$ and $H$ admit a bijection that preserves both adjacency and vertex colors. Despite the additional color constraints, CGI is computationally equivalent to the uncolored version.

The color refinement approach, also known as naive vertex classification or the $1$-dimensional Weisfeiler-Leman algorithm, is a combinatorial technique for graph isomorphism testing. Weisfeiler and Leman~\cite{WL68} originally introduced the $2$-dimensional, or \emph{classical} version, which refines colors on pairs of vertices rather than on individual vertices. The algorithm starts with an initial coloring of the vertices and iteratively refines the vertex coloring through a series of refinement steps, where each step preserves isomorphism, until no further refinement is possible. The resulting partition is called the \emph{stable partition}, and each part of the partition is called a \emph{cell}. Given two graphs $G$ and $H$, the algorithm compares the multisets of colors obtained after stabilization to determine whether the graphs are distinguishable by color refinement (see \cref{sec:prelims} for details). 

However, color refinement does not solve the isomorphism testing problem correctly on all graphs. In particular, there exist non-isomorphic graphs that color refinement fails to distinguish. This motivates the notion of amenable graphs. A graph $G$ is called \emph{amenable} if, for every graph $H \ncong G$, the color refinement algorithm distinguishes $G$ from $H$, that is, the stable partitions of $G$ and $H$ are different~\cite{AKRV17}. Arvind {\em et al.}~\cite{AKRV17} obtained a characterization of amenable graphs using structural constraints on the graph, which also led to a linear-time algorithm to check whether a graph is amenable or not. Graphs for which color refinement produces a trivial partition (singleton color classes) are called \emph{discrete graphs}, and every discrete graph is amenable~\cite{AKRV17}. Independently, Kiefer {\em et al.}~\cite{KSS21} gave a complete characterization of the graphs identified by color refinement. They also obtained a characterization of amenable graphs similar to that of~\cite{AKRV17}, and generalized the result to arbitrary relational structures, including directed graphs. In recent years, much work has been done to understand the power of color refinement for graph isomorphism testing~\cite{Ki20,K20,DH17,GKMS21}. Apart from graph isomorphism, color refinement has applications in several areas, including dimension reduction~\cite{GKMS14,KMGG14}, logic~\cite{KV15,IL90}, combinatorics~\cite{Dvo10,DGR18}, proof complexity~\cite{BG15}, graph neural networks~\cite{MRFL19,MLM23}, and graph kernels~\cite{KMGG14}.

Even though color refinement is not a complete test for graph isomorphism, it is used as a subroutine in almost all practical graph isomorphism tools~\cite{MB81}. To extend the power of color refinement further, the idea of \emph{individualization} was introduced~\cite{Tin91}. In this process, a vertex from a non-singleton color class is assigned a new unused color in both the graphs $G$ and $H$, and the color refinement procedure is applied again. Intuitively, this forces any isomorphism to map the chosen vertices to one another. The algorithm repeatedly applies individualization and refinement until either the graphs are distinguished or the partition becomes trivial (singleton color classes). If the multisets of colors in the resulting partitions are equal, then the graphs are declared isomorphic; otherwise, they are declared non-isomorphic. However, the algorithm may incorrectly conclude that two graphs are non-isomorphic even when they are actually isomorphic, because of a particular choice of vertices during individualization that destroys the existing isomorphism. On the other hand, if two graphs are non-isomorphic, then the algorithm always correctly distinguishes them. A graph $G$ is called a \emph{Tinhofer graph} if the algorithm correctly tests isomorphism between $G$ and every graph $H$ irrespective of the choices of vertices made during individualization~\cite{Tin91}. In particular, every discrete graph is Tinhofer by definition. 

 Arvind {\em et al.}~\cite{AKRV17} studied a hierarchy of various graph classes motivated by color refinement, namely Discrete, Amenable, Tinhofer, and Refinable graphs. A graph is called \emph{refinable} if the stable partition obtained by color refinement coincides with the orbit partition of the automorphism group of the graph. They showed the following containment relations among these graph classes.
\begin{equation}
    \textrm{Discrete} \subset \textrm{Amenable} \subset \textrm{Tinhofer} \subset \textrm{Refinable}
\end{equation}
 More importantly, they constructed graph examples to show that all the containments above are strict. Moreover, they also proved that the complexity of recognizing each of these graph classes is $\P$-hard~\cite{AKRV17}. \\ [-3mm]

\noindent {\bf Our Results:} Arvind {\em et al.}~\cite{AKRV17} showed that graph isomorphism can be tested in polynomial time for Tinhofer graphs. Motivated by this, we propose a new hierarchy of graph classes based on Tinhofer's algorithm, which we call the \emph{$k$-Tinhofer hierarchy}, with the goal of extending efficient isomorphism testing beyond Tinhofer graphs. For an integer $k \ge 0$, a graph $G$ is called \emph{$k$-Tinhofer} if, for every graph $H \cong G$, the colored graphs obtained after any sequence of $k$ rounds of individualization and refinement remain isomorphic, irrespective of the choices of vertices made during individualization (see the formal definition in \cref{sec:k-tin}). Intuitively, this hierarchy measures how many rounds of individualization and refinement can be performed while still preserving isomorphism. In particular, $0$-Tinhofer is the set of all graphs, while the class of Tinhofer graphs coincides with the class of $n$-Tinhofer graphs.

 Arvind {\em et al.}~\cite{AKRV17} proved that a graph $G$ is Tinhofer if and only if, for every $S \subseteq V(G)$, the partition of $V(G)$ into the orbit partition obtained by the action of the point-wise stabilizer subgroup of the automorphism group of $G$ stabilizing the set $S$ (denoted by $\operatorname{Orb}(\operatorname{Aut}_{S}(G))$) coincides with the stable partition obtained after individualizing the vertices of $S$ and applying color refinement (denoted by $P_{S}(G)$) (see \cref{sec:prelims} for the formal definitions of $\operatorname{Orb}(\operatorname{Aut}_{S}(G))$ and $P_{S}(G)$). Our first result is an extension of this characterization to the $k$-Tinhofer hierarchy. We show the following:

\begin{restatable}{theorem}{introthmcharacterization}
\label{introthmcharacterization}
Let $G$ be a graph with $|V(G)|=n$, and let $k$ be an integer with $n \ge k+1$. Then $G$ is ($k+1$)-Tinhofer if and only if, for every set $S \subseteq V(G)$ with $|S| \le k$, the partition of $V(G)$ into $\operatorname{Orb}(\operatorname{Aut}_{S}(G))$ coincides with the partition $P_{S}(G)$. 
\end{restatable}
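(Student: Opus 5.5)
The plan is to prove the two directions separately. Both rest on two standard facts about color refinement. First, refinement is isomorphism-invariant: if $\varphi : G \to H$ is an isomorphism mapping an individualized tuple $S$ to an individualized tuple $T$, then $\varphi$ maps each cell of $P_S(G)$ onto the cell of $P_T(H)$ carrying the same canonical color name. Second, $\operatorname{Orb}(\operatorname{Aut}_S(G))$ always refines $P_S(G)$. Throughout I take $H$ to be an arbitrary graph isomorphic to $G$. In each round a vertex $u_i$ of $G$ and a vertex $v_i$ of $H$ are individualized, and they must have the same color in the current colorings.

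\textbf{($\Leftarrow$)} Assume that $\operatorname{Orb}(\operatorname{Aut}_S(G)) = P_S(G)$ for every $|S| \le k$. Fix any run of $k+1$ rounds with choices $u_1,\dots,u_{k+1}$ in $G$ and $v_1,\dots,v_{k+1}$ in $H$. I prove by induction on $i \le k+1$ that there is an isomorphism $\varphi_i : G \to H$ with $\varphi_i(u_j) = v_j$ for all $j \le i$. The base case $i=0$ is just $G \cong H$.

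For the inductive step, let $S = (u_1,\dots,u_{i-1})$, so $|S| \le k$, and let $T = \varphi_{i-1}(S)$. By invariance, $\varphi_{i-1}(u_i)$ has the same color in $P_T(H)$ as $u_i$ has in $P_S(G)$. By the validity of the round, $v_i$ also has this color. Hence $u_i$ and $\varphi_{i-1}^{-1}(v_i)$ lie in the same cell of $P_S(G)$, and by hypothesis they lie in the same orbit of $\operatorname{Aut}_S(G)$. Choose $\alpha \in \operatorname{Aut}_S(G)$ with $\alpha(u_i) = \varphi_{i-1}^{-1}(v_i)$ and set $\varphi_i = \varphi_{i-1}\circ\alpha$. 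Since $\alpha$ fixes $S$ pointwise, $\varphi_i$ still maps $u_j$ to $v_j$ for $j < i$.

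At the end, $\varphi_{k+1}$ respects the individualized vertices, so by invariance it is an isomorphism of the refined colored graphs. The same holds after every intermediate round, and in particular after any early termination in a discrete partition.

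\textbf{($\Rightarrow$)} I argue by contraposition. Suppose $S = \{s_1,\dots,s_m\}$ with $m \le k$ satisfies $\operatorname{Orb}(\operatorname{Aut}_S(G)) \ne P_S(G)$. Take $H$ to be a copy of $G$ and individualize $s_j$ in both graphs in rounds $1,\dots,m$; this is always legal because identical choices have identical colors. If some $s_j$ is already a singleton cell, individualizing it changes nothing, and I simply drop it from $S$ without changing $P_S$ or $\operatorname{Aut}_S$.

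Since orbits refine $P_S(G)$, there are vertices $u \ne w$ in one cell of $P_S(G)$ that lie in different $\operatorname{Aut}_S(G)$-orbits; in particular this cell is non-singleton. In the next round, at most the $(k+1)$st, individualize $u$ in $G$ and $w$ in the copy. Suppose a color-preserving isomorphism existed between the resulting colorings. Because every individualized vertex has a unique color, it would fix $S$ pointwise and send $u$ to $w$, giving an element of $\operatorname{Aut}_S(G)$ that maps $u$ to $w$. This is a contradiction, so the colorings are non-isomorphic.

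If fewer than $k+1$ rounds have been used, I continue with arbitrary legal rounds; the hypothesis $n \ge k+1$ is what makes padding possible. Further refinement cannot restore isomorphism. Later colorings refine earlier ones canonically, so an isomorphism of the later colored graphs would also preserve the earlier canonical color classes. If the process terminates with discrete partitions, the graphs are declared non-isomorphic anyway. Either way $G$ is not $(k+1)$-Tinhofer.

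\textbf{Main obstacle.} I expect the main difficulty to be definitional rather than mathematical. One issue is making the round count exactly $k+1$ when some vertices of $S$ are already singletons, or when refinement becomes discrete early. The other is justifying rigorously that color names are canonical, so that "same color" across $G$ and $H$ is meaningful and non-isomorphism persists through later rounds. I would fix these conventions first, by stating the invariance lemma for canonical color refinement with individualized tuples. Both directions then reduce to the short orbit arguments above.
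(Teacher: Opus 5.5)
Your proposal is correct and takes essentially the same approach as the paper. The ($\Leftarrow$) direction is the same induction, correcting the current isomorphism by an element of $\operatorname{Aut}_S(G)$ that maps $u_i$ to $\varphi_{i-1}^{-1}(v_i)$. The ($\Rightarrow$) direction is the same two-copies argument: individualize $S$ identically, then individualize two vertices that share a cell of $P_S(G)$ but lie in different $\operatorname{Aut}_S(G)$-orbits.

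Your extra bookkeeping (canonical color names, skipping vertices of $S$ that are already singletons, early discrete termination) only makes explicit what the paper leaves implicit. One small correction: $n \ge k+1$ does not by itself guarantee that padding rounds exist, but your early-termination remark already covers that case.
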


 An immediate corollary of the above characterization is that the class of $1$-Tinhofer graphs coincides with the class of refinable graphs. Consequently, we obtain the following hierarchy of graph classes based on Tinhofer's algorithm and color refinement (see~\cref{fig:hierarchy}).

\begin{figure}[!ht]
    \centerline{\xymatrix{
    \text{Discrete} \ar[r] & \text{Amenable} \ar[r] & \text{Tinhofer} \ar@{=}[r] & \text{$n$-Tinhofer} \ar@{=}[d] \\
    \text{$0$-Tinhofer} \ar@{=}[d] & \text{$1$-Tinhofer} \ar[l]  \ar@{=}[d] & \text{ $\cdots$ } \ar[l] & \text{($n-1$)-Tinhofer} \ar@{->}[l]\\
    \text{All Graphs} & \text{Refinable} \ar[l] & & 
    }}
    \caption{Containment of Tinhofer Hierarchy}
    \label{fig:hierarchy}
\end{figure}

 The successive individualization and refinement steps in Tinhofer's algorithm can be represented by a tree, called the Individualization-Refinement tree (IR-tree) (studied in~\cite{ABS21}; see~\cref{sec:prelims} for a formal definition). A \emph{cell selector} is an isomorphism-invariant rule that, whenever the current coloring is not discrete, chooses a non-singleton color class (cell) for individualization. Each node of the IR-tree represents a sequence of individualized vertices, and corresponds to the stable coloring of the graph obtained after those individualizations. Such a stable coloring can also be represented by a \emph{quotient graph}, which records the adjacency relations between the color classes in the stable coloring (see~\cref{sec:prelims} for details). 

 Let $\pi$ and $\rho$ denote the stable colorings of graphs $G$ and $H$, respectively. Further, let $\pi_{\gamma}$ and $\rho_{\mu}$ denote the stable colorings obtained after individualizing the vertex sequences $\gamma = (v_1, \dots, v_k)$ and $\mu = (w_1, \dots, w_k)$ in $G$ and $H$, respectively. Let $Q(G, \pi_{\gamma})$ and $Q(H, \rho_{\mu})$ be the corresponding quotient graphs. For a fixed isomorphism-invariant cell selector $\mathrm{Sel}$, let $\Gamma_{\mathrm{Sel}}(G)$ denote the IR-tree of $G$. We prove the following combinatorial characterization of $k$-Tinhofer graphs in terms of IR-trees: 

\begin{restatable}{theorem}{introthmcharacterizationirtree}
\label{introthmcharacterizationirtree}
Let $G$ be a graph on $n$ vertices and $k$ be an integer with $n \ge k+1$. Then $G$ is $(k + 1)$-Tinhofer if and only if, for every graph $H \cong G$ and every isomorphism-invariant cell selector $\mathrm{Sel}$, the following holds: for every root-to-level-$k$ path $\gamma = (v_1, v_2, \dots, v_k)$ in $\Gamma_{\mathrm{Sel}}(G)$, there exists a corresponding root-to-level-$k$ path $\mu = (w_1, w_2, \dots, w_k)$ in $\Gamma_{\mathrm{Sel}}(H)$ such that $Q(G,\pi_{\gamma}) = Q(H,\rho_{\mu})$. 
\end{restatable}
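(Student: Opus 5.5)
We argue through the algebraic characterization in \cref{introthmcharacterization}: $G$ is $(k+1)$-Tinhofer if and only if $\operatorname{Orb}(\operatorname{Aut}_S(G)) = P_S(G)$ for every $S \subseteq V(G)$ with $|S| \le k$. So it suffices to show that the IR-tree condition in \cref{introthmcharacterizationirtree} is equivalent to this orbit condition. We use two standard facts about IR-trees with an isomorphism-invariant cell selector $\mathrm{Sel}$. First, if $\sigma: G \to H$ is an isomorphism, then $\sigma$ maps root-to-node paths of $\Gamma_{\mathrm{Sel}}(G)$ to root-to-node paths of $\Gamma_{\mathrm{Sel}}(H)$. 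Second, it carries $\pi_\gamma$ to $\rho_{\sigma(\gamma)}$, so $Q(G,\pi_\gamma) = Q(H,\rho_{\sigma(\gamma)})$, where equality of quotient graphs means equality of the colour-indexed parameters (cell sizes and inter-cell degrees).

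\emph{Forward direction.} Assume the orbit condition and fix $H \cong G$, an isomorphism $\sigma$, $\mathrm{Sel}$, and a level-$k$ path $\gamma = (v_1,\dots,v_k)$. Setting $\mu = \sigma(\gamma)$ gives a path in $\Gamma_{\mathrm{Sel}}(H)$ with $Q(G,\pi_\gamma) = Q(H,\rho_\mu)$ by equivariance. This direction needs essentially nothing beyond invariance. The only care point is when $\pi_\gamma$ becomes discrete before level $k$, so that the path terminates; we handle this by reading "level $k$" as "leaf or level $k$", consistently on both sides.

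\emph{Reverse direction.} This is the substantive part, and we prove its contrapositive. Suppose some $S$ with $|S| \le k$ has $P_S(G) \ne \operatorname{Orb}(\operatorname{Aut}_S(G))$. Since orbits always refine nothing coarser, there are $u, u'$ in the same cell of $P_S(G)$ lying in different $\operatorname{Aut}_S(G)$-orbits. Take $S$ minimal, and order it as a sequence $\gamma'$.

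The plan is to build a selector and an $H$ that witness failure at level $k$. Take $H = G$ (so the identity is an isomorphism). Choose an isomorphism-invariant selector that can realise $\gamma'$ as a path, then extend the path by individualizing $u$ to reach level $k$, padding with further individualizations if needed. We then argue that every path $\mu$ in $\Gamma_{\mathrm{Sel}}(G)$ with $Q(G,\pi_{\gamma'u}) = Q(G,\rho_{\mu})$ leads, on the sibling branch that individualizes $u'$ instead, to a contradiction with the orbit separation.

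The step I expect to be the main obstacle is making a single isomorphism-invariant selector actually route through $S$. It is not clear $S$ lies along one selector's choices, since the cells of its elements may not be the ones chosen. To address this, I would first try using the freedom in the quantifier "every $\mathrm{Sel}$": define $\mathrm{Sel}$ to pick the cell containing the next element of $S$, phrased invariantly by the colour of that cell in the canonical refinement. A second difficulty is converting "orbits differ" into "quotient graphs differ". Equal quotient graphs only certify equal stable colourings up to refinement equivalence, not isomorphism. I would therefore tie the condition to $(k+1)$-Tinhofer failure directly through the definition in \cref{sec:k-tin}, rather than through quotients alone.
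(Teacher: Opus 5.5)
\textbf{The reverse direction cannot be completed, and your forward argument shows why.} That argument never uses the orbit hypothesis. For any $G$, any isomorphism $\sigma\colon G\to H$ and any isomorphism-invariant $\mathrm{Sel}$, the path $\mu=\sigma(\gamma)$ lies in $\Gamma_{\mathrm{Sel}}(H)$ and satisfies $Q(H,\rho_{\mu})=Q(G,\pi_{\gamma})$. So the right-hand side of the statement holds for every graph, and the implication you are trying to prove would make every graph $(k+1)$-Tinhofer.

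That is false already for $k=0$. The Frucht graph is $3$-regular with trivial automorphism group. Individualizing two different vertices in two copies therefore gives non-isomorphic colored graphs, so it is not refinable (cf.\ \cref{coroll:1-tin-ref}). Yet the level-$0$ condition $Q(G,\pi)=Q(H,\rho)$ holds trivially.

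Concretely, your contrapositive fails at the claim that every matching $\mu$ leads to a contradiction on the sibling branch through $u'$. The condition asks for only one $\mu$, and with $H=G$ the path $\mu=\gamma$ always qualifies, so no choice of selector can make it fail. There is also a depth mismatch: when $|S|=k$, the node reached by individualizing $S$ and then $u$ lies at level $k+1$, which the condition never inspects.

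The paper's own proof does not get around this. It argues directly from the definition, and its converse uses \cref{lemma:quot-graph} to upgrade equal quotient graphs to isomorphic colored graphs. That is exactly the step you flagged as unjustified, and it fails in general: $C_6$ and $C_3\sqcup C_3$ have the same one-cell quotient graph but are not isomorphic. Moreover, one good $\mu$ per $\gamma$ would not show that every choice made in $H$ succeeds, and only $k$ rather than $k+1$ individualizations are covered.

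Your instinct was right. The conclusion, though, is that the right-hand side itself has to be reformulated before any proof can go through, whether via \cref{introthmcharacterization} or otherwise. One possible reformulation is to quantify over all corresponding paths at depth $k+1$ and compare the colored graphs up to isomorphism.
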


 Intuitively, $G$ is $(k + 1)$-Tinhofer if and only if every sequence of $k$ individualizations in $G$ can be matched by a corresponding sequence in any graph $H \cong G$ so that the resulting quotient graphs remain identical. This characterization reformulates ($k+1$)-Tinhofer graphs in terms of IR-trees and quotient graphs, providing a combinatorial viewpoint that is convenient for analyzing the algorithm and for proving subsequent results.\\ [-3mm] 

\noindent {\bf Separation of the levels of the hierarchy:} 
We construct graph examples that separate successive graph classes in the above hierarchy (see~\cref{fig:hierarchy}), motivated by the framework of~\cite{AKRV17}.

\begin{restatable}{theorem}{introthmseparation}
    For every fixed integer $k \ge 0$, there exist vertex-colored graphs that are $k$-Tinhofer but are not $(k + 1)$-Tinhofer.
    \label{theo:introthmseparation}
\end{restatable}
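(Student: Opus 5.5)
By \cref{introthmcharacterization}, $G$ is $k$-Tinhofer exactly when $\operatorname{Orb}(\operatorname{Aut}_S(G)) = P_S(G)$ for every $S$ with $|S|\le k-1$. So it suffices, for each fixed $k\ge 0$, to build a vertex-colored graph $G_k$ with two properties: (i) for all $|S|\le k-1$, the stable partition $P_S(G_k)$ is the orbit partition of $\operatorname{Aut}_S(G_k)$; and (ii) there is some $S^*$ with $|S^*|=k$ where $P_{S^*}(G_k)$ is strictly coarser than $\operatorname{Orb}(\operatorname{Aut}_{S^*}(G_k))$. The case $k=0$ is just a non-refinable graph, for example the disjoint union of a $6$-cycle and two triangles, all in one color. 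Color refinement leaves this graph monochromatic, but the vertices of the $C_6$ and of the triangles lie in different orbits. For the step up from level $0$ to level $1$ we use the known strict containment Refinable $\subsetneq$ All graphs.

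For general $k$ I would use a ``gadget behind $k$ locks'' construction, in the spirit of the constructions of~\cite{AKRV17}. Take a non-refinable gadget $R$ such as the one above, and let $R'$ be a copy of $R$ that has been \emph{rigidified}, for instance $C_6\cup C_6$ arranged so that refinement cannot tell it apart from $R$. We want refinement to expose the difference between $R$ and $R'$ only once all $k$ locks have been individualized. Concretely, build $G_k$ from $k$ ``switch'' color classes $X_1,\dots,X_k$, each consisting of two twin vertices $\{a_i,b_i\}$ that are swapped by an automorphism. Attach to them a layered structure: the parity of the choices (a CFI-style combination of the $a_i/b_i$ choices) decides which of two attached copies is $R$ and which is $R'$. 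The intended behaviour is as follows. As long as fewer than $k$ switches are fixed, the remaining switch swaps carry $R$ and $R'$ to each other. Hence the whole region is a single orbit, and refinement also sees it as uniform, so the partitions agree. Once all $k$ switches are fixed, the automorphism group no longer mixes the two copies. The orbits then separate the $C_6$ vertices from the triangle vertices inside $R$, but refinement cannot make this separation because of the non-refinability of $R$.

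The proof then has three steps:
\begin{enumerate}
\item Compute $\operatorname{Aut}(G_k)$ explicitly. It should be generated by the switch swaps coupled with the exchange of the $R/R'$ copies, together with the internal automorphisms of the gadgets.
\item For each $S$ with $|S|\le k-1$, show that the color-refinement invariants (degree counts into each color class) coincide with the orbit structure. The argument is that any unfixed switch gives an automorphism realizing every refinement-indistinguishability.
\item Exhibit $S^*=\{a_1,\dots,a_k\}$ and show that $P_{S^*}$ fails to split an orbit boundary inside $R$.
\end{enumerate}
Vertex colors are used throughout to keep the distinct gadget types separated; the theorem allows colored graphs, so this is permitted.

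The main obstacle is step 2. Individualizing vertices inside the gadgets, rather than the switches, must not break refinability prematurely. The difficulty is that $|S|\le k-1$ ranges over arbitrary sets, not only sets of switch vertices. I would handle this with a case analysis on how many switch classes $S$ meets directly or forces to split through refinement. The aim is to show that individualizing a gadget vertex has the same effect as fixing at most one switch. One also has to ensure that the gadget copies are refinable once they are individualized internally. If the $C_6$-versus-triangles gadget causes problems here, I would replace it with a gadget whose internal individualization immediately yields a discrete coloring, for example a colored cycle pair.
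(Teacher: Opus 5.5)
\textbf{There is a genuine gap: the proposed gadget mechanism cannot work.} Your reduction to (i) and (ii) through the algebraic characterization is fine, and so is your $k=0$ example $C_6\cup 2C_3$. The problem is the construction for $k\ge 1$.

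Automorphisms preserve the property of lying on a triangle, and $R\not\cong R'$. So no automorphism of $G_k$ ever maps $R$ to $R'$, or a triangle vertex of $R$ to a $C_6$ vertex. These vertices lie in different $\operatorname{Aut}_S(G_k)$-orbits for \emph{every} $S$, including $S=\emptyset$. Your claim that unfixed switches ``carry $R$ and $R'$ to each other'' is therefore false.

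Individualizing more vertices only refines the stable partition: $P_S$ refines $P_{S'}$ whenever $S'\subseteq S$. Suppose $P_{S^*}$ fails to separate a triangle vertex from a $C_6$ vertex, as your step 3 requires. Then $P_\emptyset$ fails to separate them too. So $G_k$ is not refinable, hence not even $1$-Tinhofer, and (i) already fails at $S=\emptyset$.

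More generally, (i) forces any witness pair $u,v$ for (ii) to lie in a common $\operatorname{Aut}_S(G_k)$-orbit for every $S\subsetneq S^*$. The orbit split must therefore be produced jointly by all $k$ individualized vertices. No gadget whose orbit distinction exists from the start can do this. In addition, the ``layered structure'' is never specified, and step 2, which you identify as the main obstacle, is not carried out.

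\textbf{The missing idea, and a caution about the paper's gadget.} The witness cell itself must be controlled by a global parity constraint that color refinement cannot see. The paper does this by gluing CFI gadgets along shared pairs. Automorphisms are then flip vectors $(f_i)$ over $\mathbb{F}_2$, with one even-parity equation per gadget.

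Be careful with the exact gluing, though. The paper's graph glues $X_3$ on $(P_0,P_1,P_2)$ to $X_{k+2}$ on $(P_1,\dots,P_{k+2})$. For $k\ge 2$, individualizing $a_0$ alone forces $f_1=f_2$. The intermediate set $F'$ then splits into two $\operatorname{Aut}_{a_0}$-orbits according to the parity $s_1+s_2$ of the string $s$, while $F'$ remains a single cell of $P_{\{a_0\}}$. Choosing, in the second round, two vertices of $F'$ with different $s_1+s_2$ shows that this graph is not even $2$-Tinhofer.

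A gluing that does work is two copies of $X_{k+2}$ sharing $P_1,P_2$, with private pairs $Q_1,\dots,Q_k$ and $R_1,\dots,R_k$.
\begin{itemize}
\item Individualizing one vertex in each $Q_j$ forces $f_1=f_2$. This leaves the other gadget's intermediate set as one refinement cell but two orbits, so (ii) holds.
\item For $|S|\le k-1$, each side retains a free private pair to absorb parity. Individualizing an intermediate vertex instead fixes that gadget's pairs, and refinement detects this as well.
\end{itemize}
A finite $\mathbb{F}_2$ case check then gives $P_S=\operatorname{Orb}(\operatorname{Aut}_S)$ for all $|S|\le k-1$, which is (i).
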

The main technical challenge lies in extending the gadget construction so that it is sensitive to the levels of the hierarchy. \\[-3mm]
 
\noindent {\bf Membership testing for the hierarchy:} We now turn to the complexity of testing membership in these graph classes. It follows from the results of~\cite{AKRV17} that testing membership in any graph class $C$ with Discrete $\subseteq C \subseteq$ Refinable is $\P$-hard. Since for every fixed integer $k \ge 1$, the class of $k$-Tinhofer graphs lies between Discrete and Refinable (see~\cref{fig:hierarchy}), testing whether a graph is $k$-Tinhofer is $\P$-hard. We explore the complexity of each layer of this hierarchy. That is, for each fixed integer $k \ge 0$, we consider \ktintest: given a graph $G$ promised to be $k$-Tinhofer, decide whether $G$ is ($k+1$)-Tinhofer. Note that the problem is trivial for $k=n-1$ since the class of ($n-1$)-Tinhofer graphs coincides with the class of $n$-Tinhofer graphs.

\begin{restatable}{theorem}{introthmcomptintestphard}
For every fixed integer $k \ge 0$, \ktintest is $\P$-hard under uniform $\ACz$ many-one reductions.
\label{theo:introthmcomptintestphard}
\end{restatable}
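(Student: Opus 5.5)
We reduce from the monotone circuit value problem \MCVP, which is $\P$-complete under uniform $\ACz$ many-one reductions, and follow the gadget framework of Arvind {\em et al.}~\cite{AKRV17}. Their reduction maps a monotone circuit $C$ to a vertex-colored graph $G_C$. Color refinement on $G_C$ simulates the evaluation of $C$: a pair of vertices representing the output gate gets split by refinement exactly when $C$ evaluates to $1$. When $C(x)=1$ the graph is discrete, hence Tinhofer. When $C(x)=0$, a residual gadget remains that is not even refinable.

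We will not use $G_C$ directly. Instead we compose it with the separating family from \cref{theo:introthmseparation}. That theorem gives, for each $k$, a colored graph $X_k$ that is $k$-Tinhofer but not $(k+1)$-Tinhofer. We also fix a colored graph $Y_k$ that is $(k+1)$-Tinhofer and has the same color-refinement profile on its attachment vertices as $X_k$.

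The reduction outputs $G = G_C \cup Z$, where the gadget $Z$ is attached to the output-gate vertices of $G_C$ through uniquely colored connector vertices. Inside $Z$, the copy of $X_k$ is wired so that its internal symmetry is broken by the refinement wave arriving from the output gate when $C(x)=1$. In that case the composite behaves like $Y_k$, or becomes discrete. When $C(x)=0$, no information arrives, and the composite still contains a copy of $X_k$ whose failing $(k+1)$-st individualization is available.

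The correctness argument uses \cref{introthmcharacterization}. We must show three things.

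\begin{enumerate}[label=(\roman*)]
\item $G$ is always $k$-Tinhofer, which is the promise. By the characterization, this amounts to checking $\operatorname{Orb}(\operatorname{Aut}_S(G)) = P_S(G)$ for every $S$ with $|S|\le k-1$. Since connectors and gadgets carry distinct colors, $\operatorname{Aut}_S(G)$ decomposes as a product over components. The circuit part behaves uniformly in both cases, because it is refinable by \cite{AKRV17}.
\item If $C(x)=1$, then the equality extends to all $|S|\le k$, because the $X_k$ copy has been rigidified.
\item If $C(x)=0$, then the witness set $S$ of size $k$ for the failure of $X_k$ still gives $\operatorname{Orb}(\operatorname{Aut}_S(G))\neq P_S(G)$.
\end{enumerate}

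There is a point to handle in the $C(x)=0$ case. The circuit gadget of \cite{AKRV17} is itself non-refinable there, which would break the promise in (i). We therefore first replace the output-gate gadget by a symmetric version in which, when $C(x)=0$, the unsplit pair is a genuine orbit: the two vertices are swapped by an automorphism, analogous to their amenable/Tinhofer reductions. As a result, $G_C$ is Tinhofer in both cases, and only $Z$ governs the level in the hierarchy.

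Uniform $\ACz$ computability follows because every gate is replaced by a constant-size gadget wired locally, and $Z$ has constant size for fixed $k$.

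The main obstacle is step (i) together with the interface, i.e. proving that attaching $Z$ neither breaks $k$-Tinhofer-ness nor lets individualizations inside $Z$ propagate back into the circuit and create new orbit/partition mismatches. We will try to resolve this by making the connectors one-directional in refinement terms. Each connector is a uniquely colored vertex whose neighbourhood into $G_C$ is a full bipartite join to one color class. Such a connector cannot split any cell of $G_C$, so $P_S(G)$ restricted to $G_C$ depends only on $S\cap V(G_C)$. The same holds for automorphisms. We then need to verify this locality claim carefully.
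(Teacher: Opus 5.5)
There is a genuine gap, and it sits at the interface you flag as the main obstacle. As designed, it breaks the reduction. A connector that is a uniquely coloured vertex is a singleton cell from round $0$ on. Its colour changes from round to round, but it stays alone in its class, so the only information it ever passes to its neighbours in $Z$ is the fixed fact of adjacency to it. Your argument that a connector \emph{cannot split a cell of $G_C$} therefore shows equally that nothing happening in $G_C$ can split a cell of $Z$. In particular, the split of $\{a_{\mathrm{out}},b_{\mathrm{out}}\}$ when $C(x)=1$ never reaches $Z$.

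So for every $S$, $P_S(G)$ restricted to $Z$ depends only on $S\cap V(Z)$, and $\operatorname{Aut}_S(G)$ is the product of the two stabilizers, as you say yourself in (i). By \cref{introthmcharacterization}, $G$ is $j$-Tinhofer if and only if both the circuit side and the $Z$ side are. The $Z$ side does not depend on $C$, and it must be $(k+1)$-Tinhofer to handle $C(x)=1$. All the work therefore falls back on the circuit side: it alone would have to be $(k+1)$-Tinhofer for $C(x)=1$, and $k$- but not $(k+1)$-Tinhofer for $C(x)=0$. That is the original problem. With your symmetrized $G_C$ (Tinhofer in both cases), the output of the map does not depend on $C$ at all. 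With the original graph of \cite{AKRV17}, $G_C$ is non-refinable for $C(x)=0$ and the promise fails. Either way, step (ii) (\emph{the $X_k$ copy has been rigidified}) never happens.

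A cleverer connector will not save a locality lemma for (i). For (ii), some vertex of $Z$ must see $a_{\mathrm{out}}$ and $b_{\mathrm{out}}$ differently, either directly or through a non-singleton cell. Then any individualization in $Z$ that splits that cell also splits the output pair and propagates back into the circuit. Directional gadgets such as IMP only block back-flow from their output side. In $Y_3$, individualizing a vertex of $P_3$ leaves $P_0$ unsplit, but individualizing a vertex of $P_1$ splits $P_0$ at once. With $k$ individualizations available, you have to analyse the coupled graph directly.

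Separately, the \emph{symmetric output gadget} that would make $G_C$ Tinhofer for $C(x)=0$ is asserted but never constructed, and \cite{AKRV17} contains no such variant to cite. The unsplit pairs of $0$-gates are tied together by CFI parity constraints, so even refinability of that part needs a construction and a proof.

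The paper does not decouple the two parts at all. It feeds the output pair into $P_0$ of an IMP gadget $Y_{k+4}$ and sends $P_{k+3},P_{k+4}$ into $X_3(P_{k+3},P_{k+4},P_m)$. It then ties $P_m$ back to the constant-$0$ inputs by two parallel edges. For $C=1$, refinement closes the loop and $N$ is discrete. For $C=0$, the level in the hierarchy is governed by the flip-parity relation of \cref{lemma:comp-k-not-k+1-tin} on the coupled graph. In other words, the paper relies on exactly the two-way coupling your design switches off.
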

 Here $k$ is fixed independently of the number of vertices in the graph. Note that when $k=0$, the above problem reduces to testing whether a graph is refinable, since the class of $1$-Tinhofer graphs coincides with the class of refinable graphs. Moreover, testing membership in refinable graphs is polynomial-time many-one reducible to graph isomorphism. Also, the problem is at least as hard as testing membership in vertex-transitive graphs, and hence at least as hard as isomorphism testing of vertex-transitive graphs (see~\cite[Section~8]{AKRV17} for details). On the hardness side, \cite{AKRV17} already shows $\P$-hardness when $k=0$. Indeed, we do not know a polynomial-time algorithm for the problem \ktintest. 

 The main technical challenge in proving the above hardness results lies in designing the gadgets that are sensitive to the different levels of the hierarchy.\\[-3mm]

\noindent {\bf FPT Algorithm:} We define the \emph{Tinhofer deficiency} of a graph $G$ on $n$ vertices to be the largest integer $\ell$ such that $G$ is not $(n-\ell)$-Tinhofer. Given a graph $G$ with Tinhofer deficiency $k$ and any graph $H$, we show that the graph isomorphism problem admits a fixed-parameter tractable (FPT) algorithm with respect to the parameter $k$. We record it as the following proposition: 

\begin{restatable}{proposition}{introfptalgoforktin}
\label{prop:introfptalgoforktin}
Given a graph $G$ on $n$ vertices with Tinhofer deficiency $k$, and any graph $H$, testing whether $G \cong H$ is fixed-parameter tractable with respect to the parameter $k$.  
\end{restatable}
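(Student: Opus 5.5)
Tinhofer deficiency $k$ means $G$ is $(n-k)$-Tinhofer. By \cref{introthmcharacterization} with $k' = n-k-1$, this says: for every $S \subseteq V(G)$ with $|S| \le n-k-1$, the orbit partition $\operatorname{Orb}(\operatorname{Aut}_S(G))$ equals the stable partition $P_S(G)$. The plan is to run Tinhofer's individualization--refinement procedure on $G$ and $H$ with arbitrary choices for the first $n-k-1$ rounds. We then finish with a brute-force search over the remaining choices, whose branching depends only on $k$.

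\textbf{Step 1 (greedy phase).} Start with color refinement on $G$ and $H$. If the color multisets differ, output ``non-isomorphic.'' Otherwise repeat the following for at most $n-k-1$ rounds:
\begin{itemize}
\item pick a non-singleton cell $C$ in $G$;
\item individualize an arbitrary $v \in C$ and an arbitrary $w$ in the corresponding cell of $H$;
\item refine both graphs, and reject if the color multisets ever differ.
\end{itemize}
Correctness when $G \cong H$ is the key point. Suppose after individualizing $\gamma=(v_1,\dots,v_i)$ and $\mu=(w_1,\dots,w_i)$ the colored graphs are isomorphic, say via $\phi$, with $i \le n-k-1$. Then $\phi^{-1}(w_{i+1})$ lies in the same cell of $P_{\gamma}(G)$ as $v_{i+1}$. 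Since this partition equals $\operatorname{Orb}(\operatorname{Aut}_{\gamma}(G))$, some automorphism fixing $\gamma$ pointwise maps $v_{i+1}$ to $\phi^{-1}(w_{i+1})$. Composing gives an isomorphism of the $(i+1)$-individualized colored graphs. By induction the colored pairs stay isomorphic through the whole greedy phase. This is exactly the statement that $(n-k)$-Tinhofer graphs survive $n-k$ arbitrary rounds, so one could equally cite the definition directly. If $G \not\cong H$, rejections are always sound, since individualization only restricts the set of candidate isomorphisms.

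\textbf{Step 2 (exhaustive phase).} After the greedy phase at most $k+1$ further individualizations are needed to reach a discrete coloring. The count of $k+1$ is because the rounds number at most $n$ and each round creates a new singleton. We cannot choose the remaining vertices in $H$ arbitrarily, since $G$ need not be Tinhofer at those depths. Instead we fix the choices in $G$ and branch over all choices in $H$ within the matching cell. Concretely:
\begin{itemize}
\item at depth $j$, branch over every vertex $w$ of the matching cell in $H$, refining and pruning on multiset mismatch;
\item at a leaf both colorings are discrete, so they induce a bijection, which we check is an isomorphism in polynomial time.
\end{itemize}
Accept if some leaf succeeds. This is sound, since acceptance exhibits an explicit isomorphism. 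It is complete: if an isomorphism $\phi$ of the current colored pair exists, the branch choosing $w=\phi(v)$ preserves it.

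\textbf{Main obstacle: bounding the branching.} A cell may contain up to $n$ vertices, giving only an $n^{O(k)}$ (XP) bound, not FPT. The step I expect to be hardest is bounding the cell sizes. At the end of the greedy phase, $n-k-1$ rounds have each created at least one new singleton, so at least $n-k-1$ vertices are singletons. Hence at most $k+1$ vertices remain in non-singleton cells, and every cell met in the exhaustive phase has size at most $k+1$. The search tree therefore has at most $(k+1)^{k+1}$ leaves. Each node costs polynomial time for refinement, so the total is $(k+1)^{k+1}\cdot n^{O(1)}$, which is FPT in $k$.

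The care needed is the degenerate case where the coloring becomes discrete before $n-k-1$ rounds. In that case the greedy phase already ends at a leaf, and we simply verify the induced bijection.
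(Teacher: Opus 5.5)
Your proof is correct and follows the paper's strategy: run Tinhofer's algorithm with arbitrary choices for about $n-k$ rounds, which is safe by the $(n-k)$-Tinhofer property. After that, at most about $k$ vertices remain in non-singleton cells, and you brute-force those. The differences are minor: the paper enumerates all color-respecting bijections of the leftover vertices ($k!$ candidates), while you run an individualization--refinement search that branches only in $H$ (at most $(k+1)^{k+1}$ leaves). You also spell out details the paper leaves implicit, namely the singleton-counting bound on the leftover cells and the soundness and completeness of the search. Your conservative choice of $n-k-1$ greedy rounds also keeps the algorithm correct if deficiency $k$ only guarantees $(n-k-1)$-Tinhofer, which is what the paper's literal definition of Tinhofer deficiency gives.
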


 We present the algorithm in~\cref{sec:fpt}. In particular, this yields a polynomial-time algorithm when the Tinhofer deficiency is bounded by a constant.\footnote{Proposition~6.1 of~\cite{AFKKR22} handles the analogous problem $(n-k)$-Discrete and obtains a similar result. Both algorithms use brute-force on a bounded non-individualized portion of the graph.} This shows that the Tinhofer deficiency provides a natural structural parameter under which graph isomorphism becomes tractable. An interesting open problem is whether graph isomorphism admits a polynomial kernel parameterized by Tinhofer deficiency. 

\section{Preliminaries} 
\label{sec:prelims}
We use standard graph notation. For a graph $G$, we denote its vertex set and edge set by $V(G)$ and $E(G)$, respectively. The neighborhood of a vertex $u \in V(G)$ is denoted by $N_{G}(u)$. For technical convenience, we consider graphs to be vertex-colored throughout the paper. A \emph{vertex-colored} graph is a graph $G$ together with a coloring $c : V(G) \rightarrow \{1, 2, \dots, k\}$. Automorphisms of a vertex-colored graph and isomorphisms between vertex-colored graphs are required to preserve vertex colors. We get ordinary (uncolored) graphs when $c$ is constant. An \textit{automorphism} is an isomorphism from $G$ to itself. The set of all automorphisms of $G$ forms a group, denoted by $\operatorname{Aut}(G)$. The orbit of a vertex $u \in V(G)$ under $\operatorname{Aut}(G)$ is $\{v \in V(G) \mid \,  \exists \alpha \in \operatorname{Aut}(G), \alpha(u) = v\}$. This relation partitions $V(G)$ into equivalence classes called the \emph{orbit partition} of $G$. For $u \in V(G)$, let $\operatorname{Aut}_{u}(G)$ denote the stabilizer subgroup of $\operatorname{Aut}(G)$ that fixes $u$, and let $\operatorname{Orb}(\operatorname{Aut}_{u}(G))$ denote its orbit partition. More generally, for a set $S \subseteq V(G)$, we define $\operatorname{Aut}_S(G) = \bigcap_{u \in S} \operatorname{Aut}_u(G)$, and let $\operatorname{Orb}(\operatorname{Aut}_{S}(G))$ denote the corresponding orbit partition.\\ [-3mm]

\noindent {\bf Weisfeiler-Leman Color Refinement:}
Color refinement is a combinatorial algorithm that iteratively refines partitions of the vertices of a graph. For a graph $G$, the initial coloring $C^{0}$ is the vertex coloring of $G$, i.e., $C^{0}(v) = c(v)$. In the $(i+1)$-st iteration, the color assigned to a vertex $v$ is defined as: $C^{i+1}(v) = (C^{i}(v), \ldblbrace C^{i}(a) : a \in N_{G}(v) \rdblbrace)$, where $\ldblbrace \dots \rdblbrace$ denotes the multiset of colors of the neighbors of $v$ under $C^{i}$. The partition $P^{i+1}$ of $V(G)$ into the color classes of $C^{i+1}$ is a refinement of the partition $P^{i}$ induced by the coloring $C^{i}$. The algorithm stops when $P^{i + 1} = P^{i}$ for some iteration $i$. The resulting partition is called a \emph{stable partition}. An orbit partition is always a stable partition, but the converse is not true~\cite{KS09}. For isomorphism testing, given two graphs $G$ and $H$, color refinement is run simultaneously on both graphs. After obtaining the stable partitions, if the multisets of colors in $G$ and $H$ are different, then the graphs are not isomorphic. Otherwise, the algorithm declares them indistinguishable by color refinement. Arvind {\em et al.}~\cite{AKRV17} defined a graph $G$ to be \emph{amenable} if, for every graph $H \ncong G$, color refinement distinguishes $G$ from $H$.
\\ [-3mm]

\noindent {\bf Tinhofer's algorithm:} Tinhofer's algorithm tests isomorphism between two graphs $G$ and $H$ by repeatedly applying color refinement together with individualization. Individualizing a vertex $v$ means assigning it a new color distinct from all other vertices of the graph~\cite{Tin91}. More generally, individualizing a set of vertices means assigning distinct new colors to each vertex in the set, thereby distinguishing them from one another and from all other vertices of the graph. We denote by $P_{S}(G)$ the stable partition obtained after individualizing a set $S \subseteq V(G)$ and then applying color refinement on $G$. If the multisets of colors in $G$ and $H$ are different, then the algorithm concludes that $G \ncong H$. If the multisets of colors are equal and all color classes are singletons, then the algorithm concludes that $G \cong H$. Otherwise, the algorithm chooses a non-singleton color class (called a \emph{cell}) in both graphs, selects arbitrary vertices $u \in V(G)$ and $v \in V(H)$ from the corresponding cells, individualizes them with the same new color, and runs color refinement again. The process continues until the graphs are either distinguished or both stable partitions become discrete~\cite{AKRV17}. A graph $G$ is called a \emph{Tinhofer graph} if, for every graph $H \cong G$, Tinhofer's algorithm correctly concludes that $G \cong H$, irrespective of the choices of vertices made during individualization~\cite{Tin91}. \\ [-3mm]

\noindent \textbf{Individualization-Refinement Tree (IR-tree):} An IR-tree, denoted by $\Gamma_{\mathrm{Sel}}(G)$, represents all possible runs of individualization and refinement in Tinhofer's algorithm on a graph $G$, with respect to a chosen isomorphism-invariant cell selector $\mathrm{Sel}$~\cite{ABS21}. The nodes of $\Gamma_{\mathrm{Sel}}(G)$ are sequences $\gamma = (v_1, v_2, \dots, v_k)$ of vertices of $G$ individualized in order, starting from the empty sequence at the root. A branch $\gamma \rightarrow \gamma \cdot v$ corresponds to extending the sequence by individualizing a vertex $v$ from the cell chosen by the cell selector. Each node of the IR-tree represents a sequence of individualized vertices and corresponds to a stable coloring obtained after those individualizations. Leaves represent sequences that yield a discrete coloring (singleton color classes).  \\ [-3mm] 

 \noindent\textbf{Quotient graph:} For a stable coloring $\pi$ of a graph $G$, the quotient graph $Q(G,\pi)$ is a colored graph that encodes the adjacency relations between the color classes of $\pi$. The quotient graph is viewed as a directed graph. The vertex set of $Q(G, \pi)$ corresponds to the color classes of $\pi$. These vertices are colored with the color of the cell they represent in $G$. An edge from a color class $C_i$ to a color class $C_j$ is labeled with the number of neighbors a vertex of $C_i$ has in $C_j$. Since $\pi$ is a stable coloring, all vertices of $C_i$ have the same number of neighbors in $C_j$~\cite{ABS21}.\\[-3mm]  

\noindent{\bf CFI gadget:} A CFI gadget, denoted by $X_{k}$, is a vertex-colored graph with $k$ external vertex pairs \{$P_{1}, P_{2}, \dots, P_{k}$\} and $2^{k-1}$ intermediate vertices. Each pair $P_i = \{ a_i, b_i  \}$ is assigned a distinct color, and all intermediate vertices, denoted by $F$, share a common color distinct from those of the external pairs~\cite{CFI92}. We refer to $P_{1}$ and $P_{2}$ as the input vertex pairs, and $P_{3}, P_{4}, \dots, P_{k}$ as the output vertex pairs of the CFI gadget $X_{k}$. The set $F$ consists of all binary strings of length $k$ with an even number of ones from the set $\{0, 1\}^{k}$. Each vertex in $F$ is adjacent to $a_{i}$ in $P_{i}$ if the $i^{th}$ bit of the string is $0$, and to $b_i$ otherwise. Thus, every intermediate vertex is adjacent to an even number of the vertices $b_{i}$. The CFI gadgets $X_{3}$ and $X_{4}$ are illustrated in Fig. \ref{fig:cfi-x-3} and \ref{fig:cfi-x-4} respectively.

 An automorphism $\phi \in Aut(X_{k})$ is said to flip a vertex pair $P_{i} = \{ a_{i}, b_{i} \}$ if $\phi(a_{i}) = b_{i}$ and $\phi(b_{i}) = a_{i}$. Otherwise, $\phi$ is said to fix the pair $P_i$ if $\phi(a_{i}) = a_{i}$ and $\phi(b_{i}) = b_{i}$. Cai {\em et al.}~\cite{CFI92} proved that, when each pair $P_i = \{a_i, b_i\}$ is assigned a distinct color, every color-preserving automorphism of the CFI gadget $X_k$ flips an even number of the pairs \{$P_{1}, P_{2}, \dots, P_{k}$\}. This parity property of the CFI gadget plays a central role in the graph constructions used for our separation and membership results.  \\[-3mm]

\noindent {\bf IMP gadget:} An IMP gadget $Y_{k}$ is a modification of the CFI gadget consisting of an additional vertex pair $P_{0}$, together with the $k$ vertex pairs $\{P_{1}, P_{2}, \dots, P_{k}\}$ and the $2^{k-1}$ intermediate vertices of the CFI gadget. The first two pairs $P_{1}$ and $P_{2}$ are connected to the vertex pair $P_{0}$ by two edges~\cite{AFKKR22}. The IMP gadget $Y_{3}$ is illustrated in \cref{fig:imp-y-3}.

\begin{figure}
\centering
\subfigure[CFI($P_1$, $P_2$, $P_3$)]{\begin{tikzpicture}[scale=1.4, every node/.style={circle, draw, fill=black, inner sep=1.5pt}]
\node (Pi1) at (0,0) {}; \node (Pi2) at (0.5,0) {};
\node (Pj1) at (1.5,0) {}; \node (Pj2) at (2,0) {};
\node (Fk1) at (0.2,1) {}; \node (Fk2) at (0.7,1) {};
\node (Fk3) at (1.2,1) {}; \node (Fk4) at (1.7,1) {}; 
\node (Pk1) at (0.85,2) {};
\node (Pk2) at (1.25,2) {}; 
\begin{scope}[on background layer]
    \fill[gray!30, rounded corners] (-0.2,-0.2) rectangle (0.7,0.2); 
    \fill[gray!30, rounded corners] (1.3,-0.2) rectangle (2.2,0.2); 
    \fill[gray!30, rounded corners] (0,0.8) rectangle (1.9,1.2); 
    \fill[gray!30, rounded corners] (0.7,1.8) rectangle (1.4,2.2); 
\end{scope}
\node[draw=none, fill=none] at (0.3,-0.4) {$P_1$}; 
\node[draw=none, fill=none] at (1.8,-0.4) {$P_2$};
\node[draw=none, fill=none] at (1,2.4) {$P_3$}; 
\node[draw=none, fill=none] at (1.7,1.3) {$F$}; 
\draw (Pi1) -- (Fk1); \draw (Pi1) -- (Fk2);
\draw (Pi2) -- (Fk3); \draw (Pi2) -- (Fk4);
\draw (Pj1) -- (Fk1); \draw (Pj1) -- (Fk3);
\draw (Pj2) -- (Fk2); \draw (Pj2) -- (Fk4);
\draw (Pk1) -- (Fk1); \draw (Pk1) -- (Fk4);
\draw (Pk2) -- (Fk2); \draw (Pk2) -- (Fk3);
\end{tikzpicture}
\label{fig:cfi-x-3} }
\hspace{2mm}
\subfigure[CFI($P_1$, $P_2$, $P_3, P_4$)]{\begin{tikzpicture}[scale=1.4, every node/.style={circle, draw, fill=black, inner sep=1.5pt}]
\centering
\node (Pi1) at (0,0) {}; \node (Pi2) at (0.5,0) {};
\node (Pj1) at (3.4,0) {}; \node (Pj2) at (3.9,0) {};
\node (Fk1) at (0.2,1) {}; \node (Fk2) at (0.7,1) {};
\node (Fk3) at (1.2,1) {}; \node (Fk4) at (1.7,1) {}; 
\node (Fk5) at (2.2,1) {};  \node (Fk6) at (2.7,1) {}; 
\node (Fk7) at (3.2,1) {};  \node (Fk8) at (3.7,1) {}; 
\node (Pk1) at (0.6,2) {}; 
\node (Pk2) at (1.1,2) {};
\node (Pk3) at (2.8,2) {};  \node (Pk4) at (3.3,2) {};
\begin{scope}[on background layer]
    \fill[gray!30, rounded corners] (-0.2,-0.2) rectangle (0.7,0.2); 
    \fill[gray!30, rounded corners] (3.2,-0.2) rectangle (4.1,0.2);
    \fill[gray!30, rounded corners] (0,0.8) rectangle (3.9,1.2); 
    \fill[gray!30, rounded corners] (0.4,1.8) rectangle (1.3,2.2);
    \fill[gray!30, rounded corners] (2.6,1.8) rectangle (3.5,2.2); 
\end{scope}
\node[draw=none, fill=none] at (0.3,-0.4) {$P_1$}; 
\node[draw=none, fill=none] at (3.7,-0.4) {$P_2$}; 
\node[draw=none, fill=none] at (0.8,2.4) {$P_3$}; 
\node[draw=none, fill=none] at (3,2.4) {$P_4$};
\node[draw=none, fill=none] at (3.8,1.3) {$F$}; 
\draw (Pi1) -- (Fk1); \draw (Pi1) -- (Fk2);
\draw (Pi1) -- (Fk3); \draw (Pi1) -- (Fk4);
\draw (Pi2) -- (Fk5); \draw (Pi2) -- (Fk6);
\draw (Pi2) -- (Fk7); \draw (Pi2) -- (Fk8);
\draw (Pj1) -- (Fk1); \draw (Pj1) -- (Fk2);
\draw (Pj1) -- (Fk5); \draw (Pj1) -- (Fk6);
\draw (Pj2) -- (Fk3); \draw (Pj2) -- (Fk4);
\draw (Pj2) -- (Fk7); \draw (Pj2) -- (Fk8);
\draw (Pk1) -- (Fk1); \draw (Pk1) -- (Fk3);
\draw (Pk1) -- (Fk5); \draw (Pk1) -- (Fk7);
\draw (Pk2) -- (Fk2); \draw (Pk2) -- (Fk4);
\draw (Pk2) -- (Fk6); \draw (Pk2) -- (Fk8);
\draw (Pk3) -- (Fk1); \draw (Pk3) -- (Fk4);
\draw (Pk3) -- (Fk6); \draw (Pk3) -- (Fk7);
\draw (Pk4) -- (Fk2); \draw (Pk4) -- (Fk3);
\draw (Pk4) -- (Fk5); \draw (Pk4) -- (Fk8);
\end{tikzpicture}
\label{fig:cfi-x-4} }
\hspace{2mm}
\subfigure[IMP gadget $Y_3$]{
\begin{tikzpicture}[scale=1.1, every node/.style={circle, draw, fill=black, inner sep=1.5pt}]
\node (Pi1) at (0,0) {};  \node (Pi2) at (0.5,0) {};
\node (Pj1) at (2,0) {};   \node (Pj2) at (2.5,0) {};
\node (Fk1) at (0.3,1) {};  \node (Fk2) at (1,1) {};
\node (Fk3) at (1.7,1) {};  \node (Fk4) at (2.4,1) {}; 
\node (Pk1) at (1,2) {};
\node (Pk2) at (1.7,2) {}; 
\node (P41) at (1, -1) {};  \node (P42) at (1.7, -1) {};
\begin{scope}[on background layer]
    \fill[gray!30, rounded corners] (-0.2,-0.2) rectangle (0.7,0.2); 
    \fill[gray!30, rounded corners] (1.8,-0.2) rectangle (2.7,0.2); 
    \fill[gray!30, rounded corners] (0.1,0.8) rectangle (2.6,1.2); 
    \fill[gray!30, rounded corners] (0.8,1.8) rectangle (2.1,2.2); 
    \fill[gray!30, rounded corners] (0.8,-0.8) rectangle (1.9,-1.2); 
\end{scope}
\node[draw=none, fill=none] at (0.1,-0.4) {$P_1$}; 
\node[draw=none, fill=none] at (2.4,-0.4) {$P_2$}; 
\node[draw=none, fill=none] at (1.4,2.4) {$P_3$};  
\node[draw=none, fill=none] at (2.3,1.3) {$F$}; 
\node[draw=none, fill=none] at (1.5,-1.4) {$P_{0}$}; 
\draw (Pi1) -- (Fk1); \draw (Pi1) -- (Fk2);
\draw (Pi2) -- (Fk3);  \draw (Pi2) -- (Fk4);
\draw (Pj1) -- (Fk1);  \draw (Pj1) -- (Fk3);
\draw (Pj2) -- (Fk2);  \draw (Pj2) -- (Fk4);
\draw (Pk1) -- (Fk1);  \draw (Pk1) -- (Fk4);
\draw (Pk2) -- (Fk2);  \draw (Pk2) -- (Fk3);
\draw (Pi1) -- (P41);  \draw (Pi2) -- (P42);
\draw (Pj1) -- (P41);  \draw (Pj2) -- (P42);
\end{tikzpicture}
\label{fig:imp-y-3} }
\caption{Illustration of the CFI gadgets $X_{3}, X_{4}$ and IMP gadget $Y_3$}
\label{fig:cfi-3-4-imp}
\end{figure}

\section{\texorpdfstring{$k$}{k}-Tinhofer graphs - Basic Properties and Characterization}
\label{sec:k-tin}
Let $G$ and $H$ be graphs. We run Tinhofer's algorithm on $G$ and $H$, starting with an initial round of color refinement. Subsequently, each step consists of selecting vertices in the two graphs from the same color class, individualizing them, and applying color refinement. Let $G^{(k)}$ and $H^{(k)}$ denote the colored graphs obtained after $k$ such steps. For $k \in \{0, \dots n\}$, a graph $G$ is called \emph{k-Tinhofer} if for every graph $H$ with $H \cong G$, we have $G^{(k)} \cong H^{(k)}$ for every choice of vertices individualized at each of the $k$ steps. 
Based on this definition, we obtain the following properties of $k$-Tinhofer graphs:

\begin{lemma} The class of $k$-Tinhofer graphs satisfies the following properties:
\label{lemma:obs-k-tin}
\begin{itemize}  \let\labelitemi\labelitemii
\itemsep -2pt
\item If a graph $G$ is $n$-Tinhofer, then $G$ is $k$-Tinhofer, where $1 \le k \le (n-1)$. 
\item If a graph $G$ is $n$-Tinhofer, then $G$ is Tinhofer. Thus, Tinhofer = $n$-Tinhofer.
\item $(k + 1)$-Tinhofer $\subset k$-Tinhofer, where $1 \le k \le (n-1)$.
\item ($n-1$)-Tinhofer = n-Tinhofer.
\item $0$-Tinhofer contains all graphs.
\end{itemize}
\end{lemma}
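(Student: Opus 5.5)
The plan is to argue directly from the definition: $G^{(k)}$ and $H^{(k)}$ are the colored graphs after an initial refinement followed by $k$ individualization--refinement steps. The one structural fact that drives most items is a \emph{prefix argument}. Suppose $G$ is $(k+1)$-Tinhofer, and fix $H \cong G$ and any admissible sequence of $k$ choices. If $G^{(k)}$ is not discrete, we can continue with one more step by picking any vertex of some non-singleton cell in $G^{(k)}$ and the corresponding vertex in $H^{(k)}$. Then $(k+1)$-Tinhofer gives $G^{(k+1)} \cong H^{(k+1)}$. Any color-preserving isomorphism of the refined graphs is also an isomorphism of $G^{(k)}$ and $H^{(k)}$: it must map the two individualized vertices to each other, and it must respect the earlier colors, since refinement is isomorphism-invariant and the stable colors are canonically determined by the history of colors. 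We have to be careful that colors are canonical names, so that matching refined colors forces matching coarser colors; I will state this as a small remark.

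The degenerate case is when $G^{(k)}$ is discrete, so that no further step exists. I would adopt the convention that the process then stops and $G^{(k+1)} = G^{(k)}$. Under that convention the claim is immediate. This gives $(k+1)$-Tinhofer $\subseteq$ $k$-Tinhofer, and iterating it gives the first item. Strictness of the inclusion in the third item is not proved here; it is deferred to the separation result, Theorem~\ref{theo:introthmseparation}. For now I would only assert containment and cite that theorem for strictness.

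For the second item, I would note that Tinhofer's algorithm performs at most $n$ individualization steps before the partition becomes discrete, since each step creates a new singleton. So running it to completion is some run of at most $n$ steps. Suppose $G$ is $n$-Tinhofer and $H \cong G$. Then at termination $G^{(\le n)} \cong H^{(\le n)}$, so the color multisets agree at every stage; a mismatch at an earlier stage would contradict the isomorphism at the end, by the prefix argument. The final discrete colorings are therefore isomorphic, and the algorithm outputs ``isomorphic''. Conversely, a Tinhofer graph is $n$-Tinhofer. If some run had $G^{(j)} \not\cong H^{(j)}$, the algorithm would reach discrete colorings that are not isomorphic. For discrete colorings, equal color multisets together with the canonical color names force the unique color-preserving bijection to be an isomorphism. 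Hence the final colorings must differ in their color multisets, so the algorithm wrongly rejects, a contradiction. This shows Tinhofer $=$ $n$-Tinhofer.

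For the fourth item, I will show that $(n-1)$-Tinhofer implies $n$-Tinhofer. After $n-1$ individualizations at least $n-1$ vertices are singletons, so the last vertex is also a singleton and the coloring is already discrete. Any $n$-th step is therefore vacuous, and $G^{(n)} = G^{(n-1)} \cong H^{(n-1)} = H^{(n)}$. The fifth item is immediate: with zero steps, $G^{(0)}$ and $H^{(0)}$ are the color-refined versions of isomorphic graphs, and refinement is isomorphism-invariant, so any isomorphism $G \to H$ remains an isomorphism of the refined graphs.

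The main obstacle I expect is making the prefix argument rigorous: an isomorphism of more-refined colored graphs must be an isomorphism of the coarser ones. This requires treating colors as canonical, history-encoding labels as in the definition $C^{i+1}(v)=(C^i(v),\ldblbrace\cdot\rdblbrace)$, rather than as arbitrary names. I would first formalize that convention and then derive all items from it.
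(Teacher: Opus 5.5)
Your proposal is correct and follows essentially the paper's argument: your prefix argument is the contrapositive of the paper's ``non-isomorphism persists under further steps.'' You are also more careful than the paper about the discrete stopping case, the canonical-color convention, and the converse Tinhofer $\Rightarrow$ $n$-Tinhofer, which the paper only asserts. On strictness in the third item: the paper, like you, proves only containment, and strictness cannot hold for $n$-vertex graphs at $k=n-1$ because of the fourth item, so Theorem~\ref{theo:introthmseparation} only supplies, for each fixed $k$, some separating graph whose size depends on $k$.
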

\begin{proof}
\begin{itemize} \let\labelitemi\labelitemii
\item Let $G$ be a graph. Suppose $G$ is $n$-Tinhofer but not $k$-Tinhofer for some $k$, where $1 \le k \le (n-1)$. Then there exists a graph $H$ such that $G \cong H$, but $G^{(k)} \ncong H^{(k)}$. Hence, after $k$ steps, the algorithm fails to recognize the isomorphism. This failure still persists in the subsequent steps, contradicting that $G$ is $n$-Tinhofer.
\item If a graph $G$ is $n$-Tinhofer, then after $n$ steps all the vertices are individualized, so the algorithm correctly identifies isomorphism. Hence, $G$ is Tinhofer.
\item Suppose $G$ is $(k + 1)$-Tinhofer but not $k$-Tinhofer for some $1 \le k \le (n-1)$. Then, there exists a graph $H$ such that $G \cong H$, but $G^{(k)} \ncong H^{(k)}$. Since further steps of the algorithm preserve non-isomorphism, we have $G^{(k+1)} \ncong H^{(k+1)}$, which contradicts the assumption that $G$ is $(k + 1)$-Tinhofer.  
\item The inclusion $n$-Tinhofer $\subset$ $(n-1)$-Tinhofer follows from the previous point. Conversely, if $G$ is ($n-1)$-Tinhofer, then only one vertex remains to be individualized, and hence $G$ is $n$-Tinhofer.  
\item $0$-Tinhofer contains all graphs, since color refinement without individualization preserves isomorphism.
\end{itemize}
\end{proof}\vspace{-3mm}

\subsection{Algebraic characterization of \texorpdfstring{$k$}{k}-Tinhofer graphs}
Arvind {\em et al.} \cite{AKRV17} proved an algebraic characterization of Tinhofer graphs based on the pointwise stabilizer subgroup of $\operatorname{Aut}(G)$. We extend this characterization to the $k$-Tinhofer hierarchy. In particular, one direction of the proof follows a similar inductive argument on the number of individualized vertices as in Lemma~7.8 of~\cite{AKRV17}. We prove the following:

\introthmcharacterization*

\begin{proof}
($\rightarrow$) Let a graph $G$ be ($k + 1$)-Tinhofer. The orbit partition of $\operatorname{Aut}_{S}(G)$ is always a refinement of $P_{S}(G)$. We use a proof by contradiction to show the other inclusion. Suppose there exist two vertices $u, v$ which belong to the same cell of $P_{S}(G)$ but $\nexists \sigma \in \operatorname{Aut}_{S}(G)$ such that $\sigma(u) = v$. We create two isomorphic copies $G'$ and $G''$ of the graph $G$. We run Tinhofer's algorithm on both $G'$ and $G''$, and individualize all vertices in $S$ one-by-one, as per the isomorphism. This process must result in the same stable partition $P_{S}(G)$ in both graphs. Next, we individualize another vertex $u \in V(G')$ and $v \in V(G'')$. Now, the vertices $u$ and $v$ are assigned the same unique color, but there is no isomorphism between $G'$ and $G''$ that maps $u$ to $v$. Hence, $G'^{(k+1)} \ncong G''^{(k+1)}$. Thus, $G$ is not ($k+1$)-Tinhofer.\vspace{1mm} \\        
 ($\leftarrow$) For the other direction, suppose that for all $S \subseteq V(G)$ with $|S| \le k$, the two partitions are equal. Let $H$ be a graph such that $H \cong G$. We run Tinhofer's algorithm on both the graphs. Let $G_{0} = G$ and $H_{0} = H$. For $0 \le i \le k+1$, let $G_{i}$ and $H_{i}$ denote the colored graphs obtained after $i$ steps of individualization. We prove by induction on $i$ that $G_{i} \cong H_{i}$. The base case $i=0$ is trivial. Assume that $G_{i} \cong H_{i}$ via a mapping $\theta$, so that $\theta(u_{j}) = v_{j}$, for all $1 \le j \le i$. Let $u_{i+1} \in V(G)$ and $v_{i+1} \in V(H)$ be the vertices individualized in the $(i+1)^{\text{th}}$ step such that $\theta(u_{i+1}) \ne v_{i+1}$. Since $u_{i+1}$ and $v_{i+1}$ are assigned the same unique color, it follows that $u_{i+1}$ and $\theta^{-1}(v_{i+1})$ lie in the same color class of $P_{S}(G)$, where $S = \{u_1, \dots u_i\}$. Thus, $\exists \beta \in \operatorname{Aut}_{S}(G)$ such that $\beta(u_{i+1}) = \theta^{-1}(v_{i+1})$. Hence, $\theta \circ \beta$ maps $u_{i+1}$ to $v_{i+1}$, and therefore $G_{i+1} \cong H_{i+1}$. Thus, $G$ is $(k + 1)$-Tinhofer.
\end{proof}

 An immediate corollary is that the class of $1$-Tinhofer graphs coincides with the class of refinable graphs. 
 \begin{corollary}
 \label{coroll:1-tin-ref}
     $1$-Tinhofer = Refinable
 \end{corollary}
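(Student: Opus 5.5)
This follows by specializing \cref{introthmcharacterization} to $k=0$ and comparing with the definition of refinable graphs. With $k=0$ (and $n \ge 1$), the theorem says that $G$ is $1$-Tinhofer if and only if, for every $S \subseteq V(G)$ with $|S| \le 0$, the orbit partition $\operatorname{Orb}(\operatorname{Aut}_S(G))$ coincides with $P_S(G)$. The only such set is $S=\emptyset$, so the condition involves a single pair of partitions.

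Next I would identify the two objects at $S=\emptyset$. By the definition in \cref{sec:prelims}, $\operatorname{Aut}_\emptyset(G)$ is the empty intersection of stabilizers, which we read as $\operatorname{Aut}(G)$ itself. Hence $\operatorname{Orb}(\operatorname{Aut}_\emptyset(G))$ is the orbit partition of $G$. Likewise, $P_\emptyset(G)$ is the stable partition obtained by color refinement with no vertex individualized, which is exactly the stable partition of $G$.

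So the criterion reads: $G$ is $1$-Tinhofer if and only if its stable partition equals its orbit partition. This is precisely the definition of refinable from~\cite{AKRV17}, which proves both inclusions of the equality.

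The only point needing care is the empty-intersection convention for $\operatorname{Aut}_\emptyset(G)$, which is standard. One should also note the degenerate case $n=0$, where both classes trivially contain the empty graph.
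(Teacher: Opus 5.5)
Your proposal is correct and takes the same route as the paper, which states the corollary as an immediate consequence of \cref{introthmcharacterization} with $k=0$. You have simply made explicit the identifications $\operatorname{Aut}_{\emptyset}(G)=\operatorname{Aut}(G)$ and $P_{\emptyset}(G)=$ the ordinary stable partition, which turn the $S=\emptyset$ condition into the definition of refinable.
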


\subsection{Combinatorial characterization of \texorpdfstring{$k$}{k}-Tinhofer graphs}
Anders {\em et al.}~\cite{ABS21} defined Individualization-Refinement tree (IR-tree) in the context of Tinhofer graphs. Using this framework, we first state a structural lemma relating colored graphs and their quotient graphs, and then use it to prove the following characterization of $k$-Tinhofer graphs.
\begin{lemma}
Let $G$ and $H$ be graphs, and let $(G, \pi_{\gamma})$ and $(H, \rho_{\mu})$ be the colored graphs obtained after performing sequences $\gamma = (v_1, v_2, \dots, v_k)$ and $\mu = (w_1, w_2, \dots, w_k)$ of $k$ individualizations and refinements in $G$ and $H$. Then $(G,\pi_{\gamma}) \cong (H, \rho_{\mu})$ if and only if $Q(G, \pi_{\gamma}) = Q(H, \rho_{\mu})$.
    \label{lemma:quot-graph}
\end{lemma}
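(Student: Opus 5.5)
The plan is to handle the two directions separately. The key point is that the colors in $\pi_\gamma$ and $\rho_\mu$ are produced canonically. The vertices $v_j$ and $w_j$ receive the same new color at step $j$. Color refinement then assigns colors as iterated pairs $(C^{i}(v), \ldblbrace C^{i}(a) : a \in N(v) \rdblbrace)$, so a color name is determined by the refinement history and not by the graph it lives in. Consequently the vertex set of $Q(G,\pi_\gamma)$ is literally a set of color names, and equality $Q(G,\pi_\gamma) = Q(H,\rho_\mu)$ is meaningful as equality of labeled directed graphs, not merely isomorphism.

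\textbf{($\Rightarrow$)} Let $\sigma:(G,\pi_\gamma)\to(H,\rho_\mu)$ be a color-preserving isomorphism. By induction on the refinement rounds, starting from the common initial coloring and the matched individualized colors, $\sigma$ maps each cell $C$ of $\pi_\gamma$ bijectively onto the cell of $\rho_\mu$ carrying the same color name. Hence both quotient graphs have the same vertex set, namely the same color names. For the edge labels, a vertex $x\in C_i$ has exactly as many neighbors in $C_j$ as $\sigma(x)$ has in $\sigma(C_j)$, because $\sigma$ preserves adjacency. Stability then makes this count well defined for the whole cell, so all edge labels agree and $Q(G,\pi_\gamma)=Q(H,\rho_\mu)$.

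\textbf{($\Leftarrow$)} This is the step I expect to be the real obstacle. Equal quotient graphs only say that refinement fails to distinguish $(G,\pi_\gamma)$ from $(H,\rho_\mu)$. In general that does not imply isomorphism; non-isomorphic regular graphs of equal degree are a counterexample already at $k=0$. So the statement should be read in one of two ways.

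\begin{itemize}
\item \emph{Indistinguishability reading.} Read ``$\cong$'' as the relation that Tinhofer's algorithm actually tests, namely equality of the color multisets after the $k$ steps. Then the converse is immediate: from $Q(G,\pi_\gamma)=Q(H,\rho_\mu)$ one reads off identical cell colors. One also needs identical cell sizes, which I would derive from the edge labels together with connectivity or the vertex counts. Equal multisets follow.
\item \emph{Leaf reading.} Restrict to the case where $\pi_\gamma$ is discrete, i.e.\ $\gamma$ is a leaf of $\Gamma_{\mathrm{Sel}}(G)$. Every cell of $Q(G,\pi_\gamma)$ is then a singleton, and an edge label $1$ from $\{x\}$ to $\{y\}$ records exactly the edge $xy$. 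Equality of quotients then means $\rho_\mu$ is also discrete, and the map sending the unique vertex of each color in $G$ to the unique vertex of that color in $H$ is a color-preserving isomorphism.
\end{itemize}

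For the later application to \cref{introthmcharacterizationirtree} I would go further. There $H\cong G$ and $G$ is assumed $(k+1)$-Tinhofer, and the aim is to show that quotient equality at level $k$ suffices for genuine isomorphism. I would try to lift an equal-quotient pair to an isomorphism by extending $\gamma$ and $\mu$ along matching branches down to leaves, using \cref{introthmcharacterization}: cells of $P_S(G)$ are orbits of $\operatorname{Aut}_S(G)$ for $|S|\le k$. At the leaves, the leaf reading above supplies the isomorphism, and the orbit property lets that isomorphism be pulled back to one respecting $\gamma\mapsto\mu$. I would flag this hypothesis explicitly, since the unconditional converse is false.
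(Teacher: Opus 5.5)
Your forward direction is the paper's argument, and your objection to the converse is correct: the gap is in the paper's proof, not in yours. The paper asserts that a bijection of cells preserving sizes and neighbor counts ``yields a color-preserving isomorphism \dots by mapping vertices within corresponding color classes according to the adjacency structure encoded by the quotient graph''. But $Q$ records only counts, not which vertices are adjacent, so there is nothing to map by. Your counterexample stands, e.g.\ $C_6$ versus $C_3\sqcup C_3$ at $k=0$; you need equal orders if cell sizes are part of $Q$, as the paper's proof tacitly assumes. The failure even survives $H=G$ and $k=1$. Take the disjoint union of the Shrikhande graph and the $4\times 4$ rook's graph, both strongly regular with parameters $(16,6,2,2)$. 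Individualizing a vertex in either component gives the same quotient (cells of sizes $1,6,9,16$ with identical labels), yet no automorphism maps one component onto the other. So the ``if'' direction is false and cannot be proved.

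Of your two readings, the leaf reading is correct. The indistinguishability reading is true but of no use to the paper, since $k$-Tinhofer is defined via genuine isomorphism $G^{(k)}\cong H^{(k)}$. In both readings, put cell sizes into $Q$ rather than deriving them. Edge labels fix size ratios only within a connected component of $Q$. For example, a blue triangle plus a red hexagon, versus a blue hexagon plus a red triangle, have equal order and equal size-free quotients but different color multisets. Likewise, without sizes $\rho_\mu$ need not be discrete in the leaf case.

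Your last paragraph aims at the wrong place. In \cref{introthmcharacterizationirtree}, the converse of this lemma is invoked only in the direction whose \emph{conclusion} is that $G$ is $(k+1)$-Tinhofer, so you cannot assume that property there. Where it is assumed, only the forward implication is used. There, the level-$k$ colorings produced by Tinhofer's algorithm are isomorphic directly by definition, since $(k+1)$-Tinhofer implies $k$-Tinhofer.

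The descent to leaves would not work anyway. Beyond level $k$ you have no orbit-equals-cell property, so matching branches need not keep quotients equal. Moreover, the existence of equal-quotient leaves below $\gamma$ and $\mu$ is equivalent to the very isomorphism you want.

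In fact no repair of the lemma can save that theorem as stated. For an isomorphism $\phi\colon G\to H$, isomorphism invariance of $\mathrm{Sel}$ makes $\phi(\gamma)$ a path of $\Gamma_{\mathrm{Sel}}(H)$. By the valid forward direction, $Q(G,\pi_\gamma)=Q(H,\rho_{\phi(\gamma)})$. So the theorem's right-hand side holds for every graph, and at $k=0$ it would make every graph refinable, contradicting the Frucht graph.
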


\begin{proof}
Suppose $(G,\pi_{\gamma}) \cong (H, \rho_{\mu})$. Then there exists a color-preserving isomorphism $\phi : (G, \pi_{\gamma}) \rightarrow (H, \rho_{\mu})$ that maps each color class of $\pi_{\gamma}$ to the corresponding color class of $\rho_{\mu}$, preserving color class sizes and neighbor counts. Hence the corresponding quotient graphs coincide. On the other hand, suppose $Q(G, \pi_{\gamma}) = Q(H, \rho_{\mu})$. Then there is a bijection between the color classes preserving color class sizes and adjacency counts. By the properties of stable colorings, this yields a color-preserving isomorphism between $(G,\pi_{\gamma})$ and $(H, \rho_{\mu})$ by mapping vertices within corresponding color classes according to the adjacency structure encoded by the quotient graph. Hence, $(G,\pi_{\gamma}) \cong (H, \rho_{\mu})$.    
\end{proof}

\introthmcharacterizationirtree*

\begin{proof}
Suppose $G$ is $(k + 1)$-Tinhofer and let $H \cong G$. Consider any root-to-level-$k$ path $\gamma = (v_1, v_2, \dots, v_k)$ in $\Gamma_{\mathrm{Sel}}(G)$. Since $G$ is ($k+1$)-Tinhofer, there exists a sequence $\mu = (w_1, w_2, \dots, w_k)$ in $H$ such that $(G, \pi_{\gamma}) \cong (H, \rho_{\mu})$. By~\cref{lemma:quot-graph}, this implies $Q(G, \pi_{\gamma}) = Q(H, \rho_{\mu})$. 

 Conversely, suppose the stated condition holds. Let $H \cong G$. Consider any sequence of $k$ individualizations performed by Tinhofer's algorithm on $G$, yielding a path $\gamma = (v_1, v_2, \dots, v_k)$ in $\Gamma_{\mathrm{Sel}}(G)$. By assumption, there exists a corresponding path $\mu = (w_1, \dots, w_k)$ in $\Gamma_{\mathrm{Sel}}(H)$ such that $Q(G, \pi_{\gamma}) = Q(H, \rho_{\mu})$. By~\cref{lemma:quot-graph}, $(G, \pi_{\gamma}) \cong (H, \rho_{\mu})$. Hence, Tinhofer's algorithm correctly identifies the graphs as isomorphic after $k$ steps of individualization and refinement, implying that $G$ is ($k+1$)-Tinhofer.  
\end{proof}

\section{\texorpdfstring{$k$}{k}-Tinhofer Hierarchy is Strict}
\label{sec:k-tin-strict}
In this section, we construct examples of graphs that separate each of the successive graph classes of the hierarchy, i.e. $k$-Tinhofer from $(k + 1)$-Tinhofer graphs (see \cref{fig:hierarchy}). We assume all graphs are vertex-colored as defined in~\cref{sec:prelims}, and all isomorphisms are color-preserving.   

 The case for $k=0$ follows since $0$-Tinhofer is the class of all graphs, while $1$-Tinhofer is exactly the class of refinable graphs, and there exist graphs (e.g., the $12$-vertex Frucht graph~\cite{F49}) that are not refinable. For every fixed integer $k \ge 1$, we construct a graph $H$ that will witness the separation.

\noindent \textbf{Construction of graph $H$:} We construct the graph $H$ by combining the CFI gadgets $X_{3}$ and $X_{k+2}$ (see \cref{fig:cfi-3-4-imp} in~\cref{sec:prelims}) to show the separation. The input pairs of both the gadgets are shared, while their output pairs remain different. Let $P_{1}$ and $P_{2}$ be the shared input pairs. The gadget $X_3$ contributes the output pair $P_{0}$, and $X_{k+2}$ contributes the output pairs $P_{3}, P_{4}, \dots, P_{k+2}$, giving $k$ output pairs in total (see~\cref{fig:seperating-example} for an illustration of the construction). By the properties of the gadgets~\cite{CFI92}, the pairs $P_{0}, P_{1}, \dots, P_{k+2}$ and the intermediate vertex sets $F$, $F'$ form distinct color classes. We first establish a structural property of the constructed graph, and then use it to construct our separating example.

\begin{figure}[!ht]
\centering
\begin{tikzpicture}[scale=1.8, every node/.style={circle, draw, fill=black, inner sep=2pt}]
\node (Pi1) at (0.4,0) {}; \node (Pi2) at (0.9,0) {};
\node (Pj1) at (3.6,0) {}; \node (Pj2) at (4.2,0) {}; 
\node (Fk11) at (-0.2,1) {}; \node (Fk12) at (0.3,1) {};
\node (Fk13) at (0.8,1) {}; \node (Fk14) at (1.3,1) {}; 
\node (Fk1) at (2.3,1) {}; \node (Fk2) at (2.8,1) {};
\node (Fk3) at (3.3,1) {}; \node (Fk4) at (3.8,1) {}; 
\node (Fk5) at (4.3,1) {};  \node (Fk6) at (4.8,1) {}; 
\node (Fk7) at (5.3,1) {};  \node (Fk8) at (5.8,1) {}; 
\node (Pk11) at (0.6,2) {}; 
\node (Pk12) at (1.1,2) {}; 
\node (Pk1) at (2.6,2) {};
\node (Pk2) at (3.1,2) {};
\node (Pk3) at (4.6,2) {};
\node (Pk4) at (5.1,2) {};
\begin{scope}[on background layer]
    \fill[gray!30, rounded corners] (0.2,-0.2) rectangle (1.1,0.2); 
    \fill[gray!30, rounded corners] (3.4,-0.2) rectangle (4.4,0.2);
    \fill[gray!30, rounded corners] (-0.4,0.8) rectangle (1.5,1.2); 
    \fill[gray!30, rounded corners] (2.1,0.8) rectangle (6,1.2); 
    \fill[gray!30, rounded corners] (0.4,1.8) rectangle (1.3,2.2);
    \fill[gray!30, rounded corners] (2.4,1.8) rectangle (3.3,2.2); 
    \fill[gray!30, rounded corners] (4.4,1.8) rectangle (5.3,2.2); 
\end{scope}
\node[draw=none, fill=none] at (0.7,-0.4) {$P_1$};
\node[draw=none, fill=none] at (3.9,-0.4) {$P_2$}; 
\node[draw=none, fill=none] at (0.9,2.4) {$P_0$}; 
\node[draw=none, fill=none] at (2.9,2.4) {$P_3$}; 
\node[draw=none, fill=none] at (4.9,2.4) {$P_4$}; 
\node[draw=none, fill=none] at (1.4,1.3) {$F$}; 
\node[draw=none, fill=none] at (5.9,1.3) {$F'$}; 
\draw (Pi1) -- (Fk11); \draw (Pi1) -- (Fk12);
\draw (Pi2) -- (Fk13); \draw (Pi2) -- (Fk14);
\draw (Pj1) -- (Fk11); \draw (Pj1) -- (Fk13);
\draw (Pj2) -- (Fk12); \draw (Pj2) -- (Fk14);
\draw (Pk11) -- (Fk11); \draw (Pk11) -- (Fk14);
\draw (Pk12) -- (Fk12); \draw (Pk12) -- (Fk13);
\draw (Pi1) -- (Fk1); \draw (Pi1) -- (Fk2);
\draw (Pi1) -- (Fk3); \draw (Pi1) -- (Fk4);
\draw (Pi2) -- (Fk5); \draw (Pi2) -- (Fk6);
\draw (Pi2) -- (Fk7); \draw (Pi2) -- (Fk8);
\draw (Pj1) -- (Fk1); \draw (Pj1) -- (Fk2);
\draw (Pj1) -- (Fk5); \draw (Pj1) -- (Fk6);
\draw (Pj2) -- (Fk3); \draw (Pj2) -- (Fk4);
\draw (Pj2) -- (Fk7); \draw (Pj2) -- (Fk8);
\draw (Pk1) -- (Fk1); \draw (Pk1) -- (Fk3);
\draw (Pk1) -- (Fk5); \draw (Pk1) -- (Fk7);
\draw (Pk2) -- (Fk2); \draw (Pk2) -- (Fk4);
\draw (Pk2) -- (Fk6); \draw (Pk2) -- (Fk8);
\draw (Pk3) -- (Fk1); \draw (Pk3) -- (Fk4);
\draw (Pk3) -- (Fk6); \draw (Pk3) -- (Fk7);
\draw (Pk4) -- (Fk2); \draw (Pk4) -- (Fk3);
\draw (Pk4) -- (Fk5); \draw (Pk4) -- (Fk8);
\end{tikzpicture}
\caption{An illustration of graph $H$ separating $2$-Tinhofer from $3$-Tinhofer. The left side and right side of the graph $H$ contain a combination of the CFI graphs $X_{3}$ and $X_{4}$ in such a way that they share the input pairs \{$P_{1}, P_{2}$\} (for $k = 2$).}
\label{fig:seperating-example}
\end{figure}    

\begin{lemma}
The vertex pair $P_{0}$ is flipped if and only if the vertex pairs $P_{3}, P_{4}, \dots, P_{k+2}$ are flipped an odd number of times in $H$. 
\label{lemma:k-flip-parity}
\end{lemma}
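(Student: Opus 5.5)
The plan is to apply the CFI parity property, stated in \cref{sec:prelims}, to each of the two gadgets that make up $H$, and to link the two applications through the shared input pairs. Let $\phi$ be any color-preserving automorphism of $H$. All the pairs $P_0, P_1, \dots, P_{k+2}$ and the intermediate sets $F, F'$ have distinct colors, so $\phi$ maps each pair $P_i$ to itself and maps $F$ and $F'$ to themselves. Hence on each pair $\phi$ either fixes or flips. Since $F$ is adjacent only to $P_0, P_1, P_2$ and $F'$ only to $P_1, \dots, P_{k+2}$, the restriction of $\phi$ to $P_0 \cup P_1 \cup P_2 \cup F$ is a color-preserving automorphism of the copy of $X_3$. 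Likewise, the restriction to $P_1 \cup \dots \cup P_{k+2} \cup F'$ is a color-preserving automorphism of the copy of $X_{k+2}$. Here I use that a bijection preserving colors and edges of $H$ preserves the induced subgraph on any union of color classes.

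Next, apply the parity result of Cai et al.\ to both restrictions. Let $f_i \in \{0,1\}$ indicate whether $\phi$ flips $P_i$. From $X_3$ we get $f_0 + f_1 + f_2 \equiv 0 \pmod 2$. From $X_{k+2}$ we get $f_1 + f_2 + f_3 + \dots + f_{k+2} \equiv 0 \pmod 2$. Adding the two congruences eliminates the shared term $f_1 + f_2$ and gives $f_0 \equiv f_3 + \dots + f_{k+2} \pmod 2$. This says exactly that $P_0$ is flipped if and only if an odd number of $P_3, \dots, P_{k+2}$ are flipped.

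If the statement is also meant as an existence claim, I would check a converse: every parity-consistent flip pattern is realized by some automorphism. To do this, recall that in a CFI gadget any even flip set is realized. For a flip vector $f$, the map on $F$ sends the string $x$ to $x \oplus f$, which preserves even weight. Given $f_0,\dots,f_{k+2}$ satisfying the parity condition, choose $f_1, f_2$ so that both gadgets' flip sets are even. Then apply the corresponding maps on $F$ and $F'$ simultaneously. They agree on the shared pairs, so together they form an automorphism of $H$.

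The main obstacle is justifying that the restrictions really are automorphisms of the individual gadgets, so that the CFI parity theorem applies. This rests on the distinctness of colors and on the fact that no edges run between $F$ and the output pairs of the other gadget. Once that is settled, the rest is a two-line parity computation.
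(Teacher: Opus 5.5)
Your proof is correct and uses the same argument as the paper: the CFI even-flip property applied separately to $X_3$ and to $X_{k+2}$, linked through the shared input pairs $P_1, P_2$. Your mod-$2$ sum of $f_0+f_1+f_2 \equiv 0$ and $f_1+\dots+f_{k+2} \equiv 0$ packages the paper's chain (flipping $P_0$ forces exactly one of $P_1,P_2$ to flip, which in turn forces an odd number of output flips), so both directions follow at once, and your check that $\phi$ restricts to an automorphism of each gadget spells out a step the paper leaves implicit.
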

\begin{proof}
Since $X_{3}$ is a CFI gadget, flipping $P_{0}$ forces exactly one of $P_{1}$ and $P_{2}$ to be flipped. Again, since $X_{k+2}$ is a CFI gadget, if exactly one of the vertex pairs $P_1$ and $P_2$ is flipped, then the vertex pairs $P_{3}, P_{4}, \dots, P_{k+2}$ must be flipped an odd number of times.    
\end{proof}

\begin{lemma}
\label{lemma:strict-k-tinhofer}
For every fixed integer $k \ge 1$, the graph $H$ is $k$-Tinhofer but not $(k + 1)$-Tinhofer.
\end{lemma}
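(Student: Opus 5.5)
We use the algebraic characterization \cref{introthmcharacterization}. It reduces the statement to two claims about orbit partitions and stable partitions.

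\begin{itemize}
\item \emph{$H$ is not $(k+1)$-Tinhofer:} it suffices to exhibit one $S$ with $|S| = k$ and $\operatorname{Orb}(\operatorname{Aut}_S(H)) \neq P_S(H)$.
\item \emph{$H$ is $k$-Tinhofer:} it suffices to show $\operatorname{Orb}(\operatorname{Aut}_S(H)) = P_S(H)$ for every $S$ with $|S| \le k-1$. For $k=1$ this means $S=\emptyset$, i.e.\ refinability.
\end{itemize}

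\textbf{Description of $\operatorname{Aut}(H)$.} Since all pairs carry distinct colors, an automorphism is determined by its flip vector $(x_0, x_1, \dots, x_{k+2}) \in \mathbb{F}_2^{k+3}$. The vertices of $F$ and $F'$ are then forced. The CFI parity property of $X_3$ and $X_{k+2}$ gives exactly the constraints
\[
x_0+x_1+x_2=0 \quad\text{and}\quad x_1+x_2+x_3+\dots+x_{k+2}=0 .
\]
Every solution is realized by an automorphism. Together these imply $x_0 = x_3+\dots+x_{k+2}$, which is \cref{lemma:k-flip-parity}.

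\textbf{Description of color refinement on $H$.} Refinement in a single CFI gadget behaves as follows.
\begin{itemize}
\item Individualizing a pair vertex fixes that pair and splits the gadget's intermediate cell according to the fixed bits. Nothing else is forced unless all but one pair of that gadget are fixed; in that case the intermediate cell becomes discrete and the last pair is fixed.
\item Individualizing an intermediate vertex fixes every pair of its gadget.
\end{itemize}
Because $P_1,P_2$ are shared, fixing them in one gadget can trigger propagation in the other. I would formalize this as a closure: $P_S(H)$ equals the partition obtained by fixing the pairs forced by $S$ and repeatedly applying the ``all but one fixed'' rule in $X_3$ and in $X_{k+2}$. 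The intermediate cells are split by the fixed bits. I would verify this partition is equitable directly: each intermediate vertex in a cell has the same number of neighbours in each unfixed pair, namely one neighbour in each such pair.

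\textbf{Non-membership.} Take $S=\{a_3,\dots,a_{k+2}\}$.
\begin{itemize}
\item In $\operatorname{Aut}_S(H)$ we have $x_3=\dots=x_{k+2}=0$, hence $x_0=0$. So $a_0$ and $b_0$ lie in different orbits.
\item In the refinement closure, $X_{k+2}$ has two unfixed pairs $P_1,P_2$ and $X_3$ has nothing fixed. So no rule fires, and $P_0$ remains a single cell of $P_S(H)$.
\end{itemize}
Hence the partitions differ and $H$ is not $(k+1)$-Tinhofer.

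\textbf{Membership for $|S|\le k-1$.} This is the main obstacle. I would case-split on whether $S$ contains an intermediate vertex.
\begin{itemize}
\item If $S$ contains a vertex of $F'$: refinement fixes $P_1,\dots,P_{k+2}$, and then $X_3$ forces $P_0$. The partition is discrete, as is the orbit partition.
\item If $S$ contains a vertex of $F$: $P_0,P_1,P_2$ become fixed. The remaining constraint on $x_3,\dots,x_{k+2}$ is a single even-parity equation in $k$ variables. At most $k-2$ further pair vertices are individualized, so at least two of $P_3,\dots,P_{k+2}$ stay free, and every unfixed pair can be flipped. This matches refinement.
\item If $S$ contains only pair vertices: the set $T$ of fixed pairs has $|T|\le k-1$. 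The linear system forces an unfixed pair $P_j$ exactly when $x_j$ is determined by the equations restricted to $T$. I would check, by going through which equation (or their sum $x_0=\sum_{i\ge3}x_i$) could force it, that this happens only when all but one pair of a single gadget lies in $T$. The sum equation could force something only if $k$ of the pairs $P_0,P_3,\dots,P_{k+2}$ were fixed, which needs $|T|\ge k$. Thus every pair not in the refinement closure is flipped by some element of $\operatorname{Aut}_S(H)$.
\end{itemize}
Finally, I would match intermediate cells with orbits. Two vertices of $F$ (or of $F'$) in the same refined cell agree on all fixed bits. An automorphism in $\operatorname{Aut}_S(H)$ with an appropriate flip vector on the free coordinates maps one to the other; solvability of the linear system with the fixed coordinates set to $0$ gives this vector. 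Hence $\operatorname{Orb}(\operatorname{Aut}_S(H)) = P_S(H)$ for all $|S|\le k-1$, so $H$ is $k$-Tinhofer by \cref{introthmcharacterization}.
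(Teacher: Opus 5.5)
\textbf{There is a genuine gap, and it cannot be closed.} Most of your set-up is sound: the reduction to \cref{introthmcharacterization}, the flip-vector description of $\operatorname{Aut}(H)$ with constraints $x_0+x_1+x_2=0$ and $x_1+\dots+x_{k+2}=0$, the closure description of $P_S(H)$, and the pair-level case analysis. Your non-membership witness $S=\{a_3,\dots,a_{k+2}\}$ is also correct. It mirrors the paper's run, which fixes $P_0,P_3,\dots,P_{k+1}$ and then picks mismatched vertices of $P_{k+2}$.

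The gap is your last step, matching intermediate cells with orbits. For $f,g$ in one cell of $F'$ you need a flip vector $x$ with $(x_1,\dots,x_{k+2})=f\oplus g$ and $x_i=0$ on fixed pairs. The system also contains $x_0=x_1+x_2$. So whenever $P_0$ is fixed while $P_1,P_2$ are not, it forces $f_1+f_2=g_1+g_2$. Agreeing on the fixed bits of $X_{k+2}$ does not give you this.

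Concretely, take $S=\{a_0\}$, which is admissible once $k\ge 2$.
\begin{itemize}
\item Refinement splits $F$ into $\{y_0=0\}$ and $\{y_0=1\}$.
\item Each of $a_1,b_1,a_2,b_2$ has exactly one neighbor in each part and $2^k$ neighbors in $F'$. So no pair other than $P_0$ splits, and $F'$ remains a single cell of $P_S(H)$.
\item Every element of $\operatorname{Aut}_S(H)$ has $x_0=0$, hence $x_1=x_2$. So $f_1+f_2$ is invariant on $F'$.
\end{itemize}
Thus the all-zero string and the string with ones exactly at $P_1$ and $P_3$ share a cell of $P_S(H)$ but lie in different orbits.

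This cannot be repaired, because the statement itself fails for $k\ge 2$. By \cref{introthmcharacterization} with $|S|=1$, $H$ is not even $2$-Tinhofer. You can also see this directly:
\begin{itemize}
\item Individualize $a_0$ in both copies.
\item Then individualize the all-zero vertex of $F'$ in one copy and the other string above in the other copy.
\item An isomorphism would be an automorphism with $x_0=0$ and $x_1+x_2=1$, and no such automorphism exists.
\end{itemize}
The paper's justification of the $k$-Tinhofer half (``the parity constraints on flips are preserved in both copies'') misses exactly this coupling of $P_1$ and $P_2$ through $P_0$, which color refinement cannot detect.

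Your argument is complete only for $k=1$. There the only relevant set is $S=\emptyset$, no coordinate is fixed, and your linear-system step goes through. For $k\ge 2$, separating the levels would need a different construction.
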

\begin{proof}
We consider two identical copies $H'$ and $H''$ of $H$ and run Tinhofer's algorithm on them. We claim that $H$ is $k$-Tinhofer. After any sequence of $k$ individualizations performed on corresponding color classes of $H'$ and $H''$, the CFI gadgets $X_{3}$ and $X_{k+2}$ still satisfy the even-flip property. In particular, the induced colorings of $H'$ and $H''$ admit a color-preserving isomorphism, since the parity constraints on flips are preserved in both copies. Hence, the resulting colored graphs remain isomorphic after $k$ steps.

 To show that $H$ is not $(k + 1)$-Tinhofer, we consider a sequence of individualizations for which the resulting colored graphs are not isomorphic after the $(k+1)$-st step. First, we individualize the same vertex in the color class $P_{0}$ in both graphs $H'$ and $H''$. The subsequent refinement step splits only $F$, and the resulting partition is stable. Next, we individualize the same vertex in the color class $P_{3}$ in both graphs $H'$ and $H''$. Again, the refinement step splits only $F'$. We continue in the same way for the next $(k-2)$ rounds, i.e., we individualize the same vertex in color classes $P_{4}, P_{5}, \dots, P_{k+1}$. The subsequent refinement further splits $F'$, whereas $P_{1}, P_{2}$ and $P_{k+2}$ do not split. Now, in the ($k+1$)-st round, we individualize color class $P_{k+2}$, but choose different vertices of $P_{k+2}$ in $H'$ and $H''$. The refinement step again splits only $F'$, whereas $P_{1}$ and $P_{2}$ are not yet split. At this point, the two colored graphs have $P_{0}, P_{3}, P_{4}, \dots, P_{k+1}$ unflipped, while $P_{k+2}$ is flipped. Hence, by~\cref{lemma:k-flip-parity}, these two colored graphs are not isomorphic. But $H'$ and $H''$ are identical copies of $H$, so this is a contradiction. Thus, $H$ is not $(k + 1)$-Tinhofer. 
\end{proof}      

\section{Complexity of \ktintest}
\label{sec:complex-k-tin}
In this section, we explore the complexity of \ktintest (see \cref{sec:intro} for definition). In particular, we show that \ktintest is \P-hard under uniform $\ACz$ many-one reductions from the Monotone Circuit Value Problem (\MCVP), which is known to be $P$-complete~\cite{Go77}. An instance of \MCVP\ consists of a monotone Boolean circuit $C$ with constant input gates and AND/OR gates, and asks whether $C$ evaluates to $1$. We assume all graphs are vertex-colored as defined in~\cref{sec:prelims}, and all isomorphisms are color-preserving.

 Our proof is motivated by the reduction in~\cite{AKRV17}, where a graph $G$ is constructed such that if $C$ evaluates to $1$, then $G$ is discrete, and if $C$ evaluates to $0$, then $G$ is not even refinable. This establishes the hardness result for the case $k=0$. For $k \ge 1$, we extend this construction to obtain a graph $N$ such that if $C$ evaluates to $1$, then $N$ is discrete, whereas if $C$ evaluates to $0$, then $N$ is $k$-Tinhofer but not $(k + 1)$-Tinhofer.

\medskip

\noindent \textbf{Construction of graph $N$:} Given a monotone Boolean circuit $C$, we construct a graph $N$ as follows: 
 Each gate $g_k$ in $C$ is represented by a vertex pair $P_{k} = \{a_{k}, b_{k}\}$. If $g_{k}$ is a constant $0$, then $a_{k}$ and $b_{k}$ receive the same color, whereas, if $g_{k}$ is a constant $1$, they are colored differently. If $g_{k}$ is an AND gate with input gates $g_{i}$ and $g_{j}$, it is replaced by a CFI gadget $X_3 = (P_{i}, P_{j}, P_{k})$ in $G$ (see~\cref{fig:cfi-x-3}). This construction ensures that the pair $P_k$ is refined if and only if both the input pairs $P_i$ and $P_j$ are refined. If $g_{k}$ is an OR gate with input gates $g_{i}$ and $g_{j}$, then it is replaced by two IMP gadgets that share the output pair $P_k$ (see~\cref{fig:or-gate} for an illustration of the gadget). This construction ensures that the pair $P_k$ is refined if and only if at least one of the input pairs $P_i$ or $P_j$ is refined. By induction on the height of the circuit, color refinement refines $P_{i} = \{a_{i}, b_{i}\}$ if and only if the corresponding gate $g_{i}$ evaluates to $1$.

    \begin{figure}[!ht]
\centering
\begin{tikzpicture}[scale=1.8, every node/.style={circle, draw, fill=black, inner sep=2pt}]
\node (Pi1) at (0,0) {};
\node (Pi2) at (0.5,0) {};
\node (Pj1) at (1.5,0) {};
\node (Pj2) at (2,0) {};
\node (Fk1) at (0.2,1) {};
\node (Fk2) at (0.7,1) {};
\node (Fk3) at (1.2,1) {};
\node (Fk4) at (1.7,1) {}; 
\node (P41) at (0.7, -1) {};
\node (P42) at (1.2, -1) {};
\node (Pi11) at (2.5,0) {};
\node (Pi12) at (3,0) {};
\node (Pj11) at (4,0) {};
\node (Pj12) at (4.5,0) {};
\node (Fk11) at (2.7,1) {};
\node (Fk12) at (3.2,1) {};
\node (Fk13) at (3.7,1) {};
\node (Fk14) at (4.2,1) {}; 
\node (P411) at (3.2, -1) {};
\node (P412) at (3.7, -1) {};
\node(PK1) at (2,2) {};
\node(PK2) at (2.5, 2) {};
\begin{scope}[on background layer]
    \fill[gray!30, rounded corners] (-0.2,-0.2) rectangle (0.7,0.2);
    \fill[gray!30, rounded corners] (1.3,-0.2) rectangle (2.2,0.2); 
    \fill[gray!30, rounded corners] (0,0.8) rectangle (1.9,1.2); 
    \fill[gray!30, rounded corners] (0.5,-0.8) rectangle (1.4,-1.2); 
    \fill[gray!30, rounded corners] (2.3,-0.2) rectangle (3.2,0.2); 
    \fill[gray!30, rounded corners] (3.8,-0.2) rectangle (4.7,0.2); 
    \fill[gray!30, rounded corners] (2.5,0.8) rectangle (4.4,1.2); 
    \fill[gray!30, rounded corners] (3,-0.8) rectangle (3.9,-1.2); 
    \fill[gray!30, rounded corners] (1.8,1.8) rectangle (2.7,2.2); 
\end{scope}
\node[draw=none, fill=none] at (0.1,-0.4) {$P_{i}'$}; 
\node[draw=none, fill=none] at (1.9,-0.4) {$P_{i}''$}; 
\node[draw=none, fill=none] at (2.5,-0.4) {$P_{j}'$}; 
\node[draw=none, fill=none] at (4.4,-0.4) {$P_{j}''$}; 
\node[draw=none, fill=none] at (2.3,2.4) {$P_{k}$};  
\node[draw=none, fill=none] at (0.5,1.3) {$F$}; 
\node[draw=none, fill=none] at (4,1.3) {$F'$}; 
\node[draw=none, fill=none] at (1,-1.4) {$P_{i}$}; 
\node[draw=none, fill=none] at (3.5,-1.4) {$P_{j}$}; 
\draw (Pi1) -- (Fk1);
\draw (Pi1) -- (Fk2);
\draw (Pi2) -- (Fk3);
\draw (Pi2) -- (Fk4);

\draw (Pj1) -- (Fk1);
\draw (Pj1) -- (Fk3);
\draw (Pj2) -- (Fk2);
\draw (Pj2) -- (Fk4);

\draw (Pi1) -- (P41);
\draw (Pi2) -- (P42);
\draw (Pj1) -- (P41);
\draw (Pj2) -- (P42);

\draw (Pi11) -- (Fk11);
\draw (Pi11) -- (Fk12);
\draw (Pi12) -- (Fk13);
\draw (Pi12) -- (Fk14);

\draw (Pj11) -- (Fk11);
\draw (Pj11) -- (Fk13);
\draw (Pj12) -- (Fk12);
\draw (Pj12) -- (Fk14);

\draw (Pi11) -- (P411);
\draw (Pi12) -- (P412);
\draw (Pj11) -- (P411);
\draw (Pj12) -- (P412);

\draw (PK1) -- (Fk1);
\draw (PK1) -- (Fk4);
\draw (PK2) -- (Fk3);
\draw (PK2) -- (Fk2);

\draw (PK1) -- (Fk11);
\draw (PK1) -- (Fk14);
\draw (PK2) -- (Fk13);
\draw (PK2) -- (Fk12);
\end{tikzpicture}
\caption{An illustration of the gadget (\cite{AKRV17}) to replace $g_{k} = g_{i} \lor g_{j}$ in our construction} 
\label{fig:or-gate}
\end{figure}

 Next, we modify the graph $G$ as follows:
\begin{itemize}
\item Attach each intermediate output pair of AND/OR gate to the IMP gadget $Y_{k+4}$, where the CFI gadget $X_{3}$ is replaced by $X_{k+4}$. 
\item Connect $P_{k+3}$ and $P_{k+4}$ to a CFI gadget $X_{3} = (P_{k+3}, P_{k+4}, P_{m})$. 
\item Connect $P_{m}$ to all constant $0$ input vertices by two parallel edges.
\end{itemize} 
The resulting graph is $N$. An illustration of the construction for $k = 1$ is shown in~\cref{fig:hardness}. By the properties of CFI gadgets~\cite{CFI92}, the pairs $P_0, P_1, \dots, P_{k+4}, P_{m}$ and the intermediate color classes $F$ and $F'$ all form distinct color classes. We first establish a structural property of $N$, and then use it to prove the correctness of our reduction.

\begin{figure}[!ht] 
\centering
\begin{tikzpicture}[scale=1.7, every node/.style={circle, draw, fill=black, inner sep=2pt}]
\node (Pi1) at (1.5,0) {}; \node (Pi2) at (2,0) {};
\node (Pj1) at (5.5,0) {}; \node (Pj2) at (6,0) {};
\node (Fk1) at (-0.2,1) {}; \node (Fk2) at (0.3,1) {};
\node (Fk3) at (0.8,1) {}; \node (Fk4) at (1.3,1) {};
\node (Fk5) at (1.8,1) {}; \node (Fk6) at (2.3,1) {};
\node (Fk7) at (2.8,1) {}; \node (Fk8) at (3.3,1) {};
\node (Fk9) at (3.8,1) {}; \node (Fk10) at (4.3,1) {};
\node (Fk11) at (4.8,1) {}; \node (Fk12) at (5.3,1) {};
\node (Fk13) at (5.8,1) {}; \node (Fk14) at (6.3,1) {};
\node (Fk15) at (6.8,1) {}; \node (Fk16) at (7.3,1) {};
\node (P31) at (1.5,2) {};  \node (P32) at (2,2) {};
\node (P41) at (3.5,2) {};  \node (P42) at (4,2) {};  
\node (P51) at (5.5,2) {}; \node (P52) at (6,2) {};
\node (F31) at (4.1,3) {}; \node (F32) at (4.5,3) {};
\node (F33) at (5,3) {}; \node (F34) at (5.4,3) {}; 
\node (P01) at (4.5,4) {};
\node (P02) at (5,4) {}; 
\node (Pd1) at (3.5,-1) {}; \node (Pd2) at (4,-1) {};
\node (Po1) at (4.5,4.8) {};   \node (Po2) at (5,4.8) {}; 
\node (PD1) at (3.5,-1.8) {};    \node (PD2) at (4,-1.8) {}; 

\begin{scope}[on background layer]
       \fill[white, draw=black, rounded corners] (-0.5,-1.3) rectangle (7.6,2.6); 
    \fill[gray!30, rounded corners] (1.3,-0.2) rectangle (2.2,0.2);
    \fill[gray!30, rounded corners] (5.3,-0.2) rectangle (6.2,0.2); 
    \fill[gray!30, rounded corners] (-0.4,0.8) rectangle (7.5,1.2);
      \fill[gray!30, rounded corners] (3.3,-1.2) rectangle (4.2,-0.8); 
    \fill[gray!30, rounded corners] (1.3,1.8) rectangle (2.2,2.2); 
    \fill[gray!30, rounded corners] (3.3,1.8) rectangle (4.2,2.2); 
    \fill[gray!30, rounded corners] (5.3,1.8) rectangle (6.2,2.2);
 \fill[gray!30, rounded corners] (3.9,2.8) rectangle (5.6,3.2);
    \fill[gray!30, rounded corners] (4.3,3.8) rectangle (5.2,4.2);
\end{scope}
\node[draw=none, fill=none] at (5.8,-0.4) {$P_2$}; 
\node[draw=none, fill=none] at (1.8,-0.4) {$P_1$}; 
\node[draw=none, fill=none] at (1.8,2.4) {$P_3$}; 
\node[draw=none, fill=none] at (3.4,2.4) {$P_4$}; 
\node[draw=none, fill=none] at (6,2.4) {$P_5$}; 
\node[draw=none, fill=none] at (7.4,1.3) {$F$}; 
\node[draw=none, fill=none] at (3.8,-0.6) {$P_{0}$}; 
\node[draw=none, fill=none] at (5.5,4) {$P_m$};  
\node[draw=none, fill=none] at (4.1,3.4) {$F'$}; 
 \node[draw=none, fill=none] at (0.4,2.8) {\textbf{IMP gadget $Y_{5}$}}; 
\node[draw=none, fill=none, overlay] at (4.8,5.1) {\small{\textbf{Connect to input $0$ gate}}};
\node[draw=none, fill=none, overlay] at (3.8,-2.1) {\small{\textbf{AND/OR output gate}}};
\draw (Pi1) -- (Fk1); \draw (Pi1) -- (Fk2);
\draw (Pi1) -- (Fk3); \draw (Pi1) -- (Fk4);
\draw (Pi1) -- (Fk5); \draw (Pi1) -- (Fk6);
\draw (Pi1) -- (Fk7); \draw (Pi1) -- (Fk8);
\draw (Pj1) -- (Fk9); \draw (Pi2) -- (Fk10);
\draw (Pi2) -- (Fk11); \draw (Pi2) -- (Fk12);
\draw (Pi2) -- (Fk13); \draw (Pi2) -- (Fk14);
\draw (Pi2) -- (Fk15); \draw (Pi2) -- (Fk16);
\draw (Pj1) -- (Fk1); \draw (Pj1) -- (Fk2);
\draw (Pj1) -- (Fk3); \draw (Pj1) -- (Fk4);
\draw (Pj1) -- (Fk9); \draw (Pj1) -- (Fk10);
\draw (Pj1) -- (Fk11); \draw (Pj1) -- (Fk12);
\draw (Pj2) -- (Fk5); \draw (Pj2) -- (Fk6); 
\draw (Pj2) -- (Fk7); \draw (Pj2) -- (Fk8);
\draw (Pj2) -- (Fk13); \draw (Pj2) -- (Fk14);
\draw (Pj2) -- (Fk15); \draw (Pi2) -- (Fk16);
\draw (P31) -- (Fk1); \draw (P31) -- (Fk2);
\draw (P31) -- (Fk5); \draw (P31) -- (Fk6);
\draw (P31) -- (Fk9); \draw (P31) -- (Fk10);
\draw (P31) -- (Fk13); \draw (P31) -- (Fk14);
\draw (P32) -- (Fk3); \draw (P32) -- (Fk4);
\draw (P32) -- (Fk7); \draw (P32) -- (Fk8);
\draw (P32) -- (Fk11); \draw (P32) -- (Fk12);
\draw (P32) -- (Fk15); \draw (P32) -- (Fk16);
\draw (P31) -- (Fk1); \draw (P31) -- (Fk2);
\draw (P31) -- (Fk5); \draw (P31) -- (Fk6);
\draw (P31) -- (Fk9); \draw (P31) -- (Fk10);
\draw (P31) -- (Fk13); \draw (P31) -- (Fk14);
\draw (P32) -- (Fk3); \draw (P32) -- (Fk4);
\draw (P32) -- (Fk7); \draw (P32) -- (Fk8);
\draw (P32) -- (Fk11); \draw (P32) -- (Fk12);
\draw (P32) -- (Fk15); \draw (P32) -- (Fk16);
\draw (P41) -- (Fk1); \draw (P41) -- (Fk3);
\draw (P41) -- (Fk5); \draw (P41) -- (Fk7);
\draw (P41) -- (Fk9); \draw (P41) -- (Fk11);
\draw (P41) -- (Fk13); \draw (P41) -- (Fk15);
\draw (P42) -- (Fk2); \draw (P42) -- (Fk4);
\draw (P42) -- (Fk6); \draw (P42) -- (Fk8);
\draw (P42) -- (Fk10); \draw (P42) -- (Fk12);
\draw (P42) -- (Fk14); \draw (P42) -- (Fk16);
\draw (P51) -- (Fk1); \draw (P51) -- (Fk4);
\draw (P51) -- (Fk6); \draw (P51) -- (Fk7);
\draw (P51) -- (Fk10); \draw (P51) -- (Fk11);
\draw (P51) -- (Fk13); \draw (P51) -- (Fk16);
\draw (P52) -- (Fk2); \draw (P52) -- (Fk3);
\draw (P52) -- (Fk5); \draw (P52) -- (Fk8);
\draw (P52) -- (Fk9); \draw (P52) -- (Fk12);
\draw (P52) -- (Fk14); \draw (P52) -- (Fk15);
\draw (Pd1) -- (Pi1); \draw (Pd2) -- (Pi2);
\draw (Pd1) -- (Pj1); \draw (Pd2) -- (Pj2);
\draw (P41) -- (F31); \draw (P41) -- (F32);
\draw (P42) -- (F33); \draw (P42) -- (F34);
\draw (P51) -- (F31); \draw (P51) -- (F33);
\draw (P52) -- (F32); \draw (P52) -- (F34);
\draw (P01) -- (F31); \draw (P01) -- (F34);
\draw (P02) -- (F32); \draw (P02) -- (F33);
\draw (P01) -- (Po1); \draw (P02) -- (Po2);
\draw (PD1) -- (Pd1); \draw (PD2) -- (Pd2);
\end{tikzpicture}
\vspace{4mm}
\caption{An illustration of the gadget that forms the foundation for constructing the graph that is $1$-Tinhofer, but not $2$-Tinhofer. The graph contains IMP gadget $Y_{5} = (P_{0}, P_1, P_2, P_3, P_4, P_5)$ (for $k = 1$). The last two vertex pairs $P_{4}$ and $P_{5}$ of $Y_{5}$ are connected to CFI graph $X_{3} (P_{m}, P_{4}, P_{5})$.} 
\label{fig:hardness}
\end{figure}

\begin{lemma}
\label{lemma:comp-k-not-k+1-tin}
The vertex pair $P_{m}$ is flipped if and only if the vertex pairs $P_{1}, P_{2}, \dots, P_{k+4}$ are flipped an odd number of times in $N$. 
\end{lemma}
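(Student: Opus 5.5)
The plan mirrors the proof of \cref{lemma:k-flip-parity}: trace how a flip propagates through the two CFI-type gadgets that make up the relevant part of $N$, using at each step the even-flip property of CFI gadgets~\cite{CFI92}. The relevant part is the IMP gadget $Y_{k+4}$ on $(P_0,P_1,\dots,P_{k+4})$ and the CFI gadget $X_3=(P_{k+3},P_{k+4},P_m)$ attached to it. Since all pairs $P_0,\dots,P_{k+4},P_m$ and the classes $F,F'$ form distinct color classes, every color-preserving automorphism maps each pair to itself. So each automorphism either flips or fixes each pair, and it suffices to track flips pair by pair.

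\textbf{Step 1.} Apply the parity property to $X_3=(P_{k+3},P_{k+4},P_m)$. Any automorphism flips an even number of these three pairs. Hence $P_m$ is flipped if and only if exactly one of $P_{k+3},P_{k+4}$ is flipped, i.e.\ if and only if the pairs $P_{k+3},P_{k+4}$ are flipped an odd number of times.

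\textbf{Step 2.} Analyse the IMP gadget $Y_{k+4}$. The two edges from $P_0$ to $P_1$ and to $P_2$ force $P_1$ and $P_2$ to be flipped together: a flip of $P_0$ flips both, and a fixed $P_0$ fixes both. Their joint contribution to any parity count is therefore even. Combined with the even-flip property of the inner $X_{k+4}$ on $(P_1,\dots,P_{k+4})$, this gives that the parity of flips among $P_3,\dots,P_{k+2}$ equals the parity of flips among $P_{k+3},P_{k+4}$.

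\textbf{Step 3.} Chain Steps~1 and~2, in the same way \cref{lemma:k-flip-parity} chains $X_3$ and $X_{k+2}$. This gives: $P_m$ is flipped exactly when the flips propagated from $P_m$ back through $X_3$ and then through $Y_{k+4}$ to the pairs $P_1,\dots,P_{k+4}$ occur an odd number of times, which is the statement of the lemma.

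The main obstacle is Step~3, specifically deciding which pairs are counted. The pairs $P_{k+3},P_{k+4}$ belong to both gadgets, so a flip of one of them has to be counted once, as the flip that is transmitted from the $X_3$ side into $Y_{k+4}$. Counting it separately on each side would cancel the parity. I would first fix this convention explicitly, reading the count as the flips induced on $P_1,\dots,P_{k+4}$ by a flip of $P_m$ propagating through $X_3$. Then I would check on the $k=1$ instance in \cref{fig:hardness}, with $Y_5$ and $X_3(P_m,P_4,P_5)$, that flipping a single vertex pair of $P_m$ yields the claimed odd pattern, before writing the general argument.
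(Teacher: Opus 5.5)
Your Steps~1 and~2 are correct and follow the same chain as the paper: parity of $X_3=(P_{k+3},P_{k+4},P_m)$, then parity of $X_{k+4}$ inside $Y_{k+4}$. On $P_1,P_2$ you are more careful than the paper. The paper asserts they must be flipped whenever $P_m$ is, whereas you correctly note only that $P_0,P_1,P_2$ are flipped together.

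The gap is Step~3, and no counting convention can close it. Each pair is either flipped or fixed by a given automorphism, so the number of flipped pairs among $P_1,\dots,P_{k+4}$ is well defined. By the even-flip property of $X_{k+4}$ that you use in Step~2, this number is \emph{always even}. Read literally, the lemma therefore asserts that no automorphism of $N$ ever flips $P_m$. Steps~1--2 do not imply this, and it is false in general. Take $k\ge 1$, so that $P_3\notin\{P_{k+3},P_{k+4}\}$. Let $C$ be a single AND gate with constant-$0$ inputs $P_{c_1},P_{c_2}$ and output $P_D$ attached to $P_0$. Flip exactly $P_m,P_{c_1},P_{c_2},P_3,P_{k+4}$ and fix everything else. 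This respects every gadget:
\begin{itemize}
\item there are two flips in each of $X_3(P_{c_1},P_{c_2},P_D)$, $X_{k+4}$ and $X_3(P_{k+3},P_{k+4},P_m)$;
\item there are none on $P_0,P_1,P_2,P_D$;
\item the parallel edges between $P_m$ and the constant-$0$ pairs are preserved.
\end{itemize}
Yet only two of $P_1,\dots,P_{k+4}$ are flipped. The $k=1$ check you planned would show exactly this: flipping $P_m$ forces an even pattern such as flips at $P_3$ and $P_5$, not an odd one.

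What Steps~1--2 actually prove is a corrected statement: $P_m$ is flipped if and only if an odd number of $P_{k+3},P_{k+4}$ are flipped. Equivalently, this holds if and only if an odd number of $P_3,\dots,P_{k+2}$ are flipped. You should state and prove that version explicitly rather than reinterpret the count. The paper's own proof does not resolve this either. After asserting that $P_1,P_2$ are flipped, it concludes an odd number of flips among $P_3,\dots,P_{k+4}$, which contradicts the parity of $X_{k+4}$. The corrected version, indeed Step~1 alone, is all that \cref{claim:c-0} uses: there $P_{k+3},P_{k+4}$ are unflipped while $P_m$ is flipped.
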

\begin{proof}
If $P_{m}$ is flipped, then $P_{1}$ and $P_{2}$ must be flipped according to the construction. In the gadget $X_3$, one of $P_{k+3}$ or $P_{k+4}$ must be flipped to preserve automorphisms of the CFI gadget. Then, by the parity property of $X_{k+4}$, the pairs $P_{3}, \dots, P_{k+4}$ must be flipped an odd number of times.    
\end{proof}

\begin{claim}
If $C$ evaluates to $1$, then $N$ is discrete.
\label{claim:c-1}
\end{claim}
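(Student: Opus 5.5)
The plan is a propagation argument for color refinement. Discreteness spreads from the constant-$1$ inputs through the gate gadgets, then through the added IMP/CFI gadgets, and finally back to the constant-$0$ inputs through $P_m$. I will first prove an invariant for the circuit part, namely the construction of $G$ before modification. By induction on gate height, if $g_i$ evaluates to $1$, then the stable partition of $N$ splits $P_i$ into singletons $\{a_i\},\{b_i\}$. The base case is immediate, because constant-$1$ pairs are colored differently. For an AND gate $X_3=(P_i,P_j,P_k)$ with both inputs split, refinement splits $F$: each intermediate vertex is determined by its neighbours in $P_i$ and $P_j$ together with the parity constraint, so $F$ becomes discrete, and then $a_k$ and $b_k$ have neighbourhoods in differently colored cells of $F$, so $P_k$ splits. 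For an OR gate, one input, say $P_i$, is split. The IMP gadget carrying $P_i$ then splits its copies $P_i'$ and $P_i''$ via the two edges to $P_i$, hence its $F$ becomes discrete, hence $P_k$ splits. Monotonicity guarantees that only these upward implications are needed. Since $C$ evaluates to $1$, the output pair splits.

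Next I will push this through the added structure. The output pair is attached by two edges to the $P_0$ of $Y_{k+4}$, so $P_0$ splits, and then $P_1,P_2$ split through the IMP edges. With the inputs $P_1,P_2$ of $X_{k+4}$ split, the intermediate set $F$ refines into classes indexed by the bits in positions $1,2$. I expect \textbf{this to be the main obstacle}: with more than three pairs, splitting two inputs of $X_{k+4}$ does \emph{not} by itself split the outputs $P_3,\dots,P_{k+4}$, because each $a_\ell$ and $b_\ell$ sees the same number of vertices in each refined cell of $F$. So I have to use the surrounding structure. My first attempt is to check whether the construction attaches each output pair $P_3,\dots,P_{k+4}$ to further split structure; the phrase ``attach each intermediate output pair'' suggests that the outputs of other gates feed in. If not, I will instead argue via $X_3=(P_{k+3},P_{k+4},P_m)$ together with the constant-$0$ connections, to obtain a forcing chain. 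Failing both, I would read Claim~\ref{claim:c-1} as relative to the intended gadget behaviour, in which a refined input pair forces its outputs, as in~\cite{AKRV17}, and cite that property.

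Once all pairs in $X_{k+4}$ are split, $F$ is discrete. Then $P_{k+3}$ and $P_{k+4}$ split, so $F'$ and $P_m$ split, and the parallel edges from $P_m$ split every constant-$0$ pair. At that point every pair in the whole graph is split, so each gadget's intermediate set is discrete as in the AND argument, and every vertex lies in a singleton cell. This shows that the stable partition of $N$ is discrete. The final check is that no vertex lies outside these pairs and intermediate sets, which holds by the construction.
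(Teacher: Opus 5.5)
The gap is exactly at the step you flagged, and none of your three fallbacks closes it. Your computation is right: in $X_m$, a pair is split from inside the gadget only once all the other $m-1$ pairs are split. With only $P_1,P_2$ split, the gadget $X_{k+4}$ (which has $k+4\ge 4$ pairs) forces nothing.

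Fallback (1) fails because the construction, as described and as drawn for $k=1$, attaches nothing else to $P_3,\dots,P_{k+2}$; in the figure $P_3$ is a dangling pair. The pairs $P_{k+3},P_{k+4}$ meet only $X_3=(P_{k+3},P_{k+4},P_m)$.

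Fallback (2) is circular. $P_m$ can be split only via $F'$, which needs both $P_{k+3}$ and $P_{k+4}$ split, or via the constant-$0$ pairs. Those pairs are themselves split only via $P_m$, since inside the circuit a split never propagates back to a $0$-valued pair.

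Fallback (3) cites a property that is false. The gadgets of~\cite{AKRV17} use $X_3$, where two split pairs force the third, and that is precisely what fails for $X_{k+4}$.

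In fact no better argument can repair this step, because for the construction as stated the claim fails. Consider the following partition:
\begin{itemize}
\item the circuit part is colored as color refinement colors $G$, so the output pair is split and all $0$-valued pairs, including the constant-$0$ pairs, are not;
\item $P_0,P_1,P_2$ are split;
\item the intermediate set of $X_{k+4}$ is cut into the four classes determined by bits $1$ and $2$;
\item $P_3,\dots,P_{k+4}$, $F'$ and $P_m$ are left unsplit.
\end{itemize}
This partition refines the initial coloring, and it is equitable:
\begin{itemize}
\item for $3\le \ell\le k+4$, both $a_\ell$ and $b_\ell$ have exactly $2^{k}$ neighbours in each of the four classes, and for $\ell\in\{k+3,k+4\}$ two neighbours in $F'$;
\item each vertex of $F'$ has one neighbour in each of the cells $P_{k+3},P_{k+4},P_m$;
\item $a_m$ and $b_m$ each have two neighbours in $F'$ and one in every constant-$0$ pair;
\item the only new neighbours of circuit vertices are $a_0,b_0$, attached to the already split output pair, and one vertex of $P_m$ for each vertex of a constant-$0$ pair.
\end{itemize}
Color refinement returns the coarsest equitable refinement of the initial coloring. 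Hence $P_m$ and $P_3,\dots,P_{k+4}$ stay unsplit, and $N$ is not discrete when $C$ evaluates to $1$.

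The paper's own proof just asserts that all gate pairs get refined, which silently assumes the very propagation you could not justify. Proving the claim would require modifying the construction so that splitting $P_0$ genuinely forces $P_3,\dots,P_{k+4}$ to split.
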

\begin{proof}
Suppose $C$ evaluates to $1$. Then all gate pairs are refined by color refinement. The additional edges from $P_m$ to each constant-$0$ vertices ensure that these vertices are also refined. Hence all vertices are individualized and $N$ is discrete.
\end{proof}
\begin{claim}
If $C$ evaluates to $0$, then $N$ is $k$-Tinhofer but not ($k+1)$-Tinhofer.
\label{claim:c-0}
\end{claim}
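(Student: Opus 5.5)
The plan is to mirror the proof of \cref{lemma:strict-k-tinhofer}, using \cref{lemma:comp-k-not-k+1-tin} in place of \cref{lemma:k-flip-parity} and \cref{introthmcharacterization} as the formal criterion. Suppose $C$ evaluates to $0$. The first step is to identify the stable coloring of $N$. By the inductive argument on circuit height from the construction, the output pair $P_0$ of the circuit is not refined. Since $C$ evaluates to $0$, the attached gadget $Y_{k+4}$ and the CFI gadget $X_3=(P_{k+3},P_{k+4},P_m)$ receive no distinguishing information. Hence none of $P_1,\dots,P_{k+4},P_m$ is refined, and $P_m$ does not split the constant-$0$ pairs. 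The stable partition therefore keeps every unrefined pair as a two-element cell, and the intermediate sets $F,F'$ as their natural cells.

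For the \emph{$k$-Tinhofer} part, I will apply \cref{introthmcharacterization} with parameter $k-1$: it suffices to show that for every $S$ with $|S|\le k-1$, the partition $P_S(N)$ coincides with $\operatorname{Orb}(\operatorname{Aut}_S(N))$. The key observation concerns the flip constraint of \cref{lemma:comp-k-not-k+1-tin}, combined with the CFI parity constraints inside the circuit part and the $Y_{k+4}$ gadget. Together they form a system of parity (linear over $\mathbb{F}_2$) constraints on which pairs are flipped. The cycle involving $P_m$, $P_{k+3},P_{k+4}$ and $P_3,\dots,P_{k+4}$ has ``width'' $k+1$: one needs at least $k+1$ fixed pairs among these before an automorphism is forced to be trivial on a remaining unrefined pair.

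Concretely, let $S$ have at most $k-1$ vertices, individualized in unsplit pairs or intermediate cells. I would show that color refinement after individualizing $S$ refines exactly the pairs whose flip status is determined by the parity constraints. For every cell that remains, I would exhibit an automorphism in $\operatorname{Aut}_S(N)$ flipping it, by choosing a flip pattern that satisfies all even-parity constraints. Such a pattern exists because fewer than $k+1$ constraints are pinned. By the standard CFI fact, every consistent flip pattern lifts to an automorphism. This gives $P_S(N)=\operatorname{Orb}(\operatorname{Aut}_S(N))$.

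For the \emph{not $(k+1)$-Tinhofer} part, I follow the explicit sequence used for $H$. In two copies $N',N''$, individualize the same vertex in $P_m$, then the same vertices in $P_3,\dots,P_{k+2}$ (this is $k$ rounds in total). After each round I check that refinement only splits the intermediate cells $F,F'$ and leaves $P_1,P_2,P_{k+3}$ (or the last free pair) unsplit. In round $k+1$, individualize different vertices of the last free pair in $N'$ and $N''$. The two resulting colorings then agree on all fixed pairs except one flipped pair. This violates the parity in \cref{lemma:comp-k-not-k+1-tin}, so no color-preserving isomorphism exists and $N$ is not $(k+1)$-Tinhofer. Equivalently, via \cref{introthmcharacterization}, I exhibit $S$ with $|S|=k$ where the last pair forms one cell of $P_S(N)$ but lies in two orbits.

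The main obstacle is the $k$-Tinhofer direction. One must control color refinement after arbitrary individualizations, including ones inside the circuit gadgets, the OR gadgets, and the intermediate sets $F$. Individualizing an intermediate vertex fixes several pair flips at once, so I must argue that the circuit part is itself Tinhofer-like: its orbits equal its refinement cells, since it is built from CFI and IMP gadgets on a tree-like structure. Only the final cycle through $P_m$ creates a discrepancy, and that discrepancy requires $k$ pinned pairs. I would first try to reduce to counting pinned pairs on that cycle, treating each individualized intermediate vertex as pinning all of its incident pairs.
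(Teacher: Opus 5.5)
The $k$-Tinhofer half of your plan cannot be completed, and the step fails already at $S=\emptyset$. Your claim that $P_S(N)=\operatorname{Orb}(\operatorname{Aut}_S(N))$ for $|S|\le k-1$ assumes the circuit and the attached gadgets are ``Tinhofer-like''. This is false for the IMP gadgets.

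In $Y_{k+4}$, the edges $a_0a_1$, $a_0a_2$, $b_0b_1$, $b_0b_2$ force every automorphism of $N$ to flip $P_1$ iff it flips $P_0$, iff it flips $P_2$. Color refinement does not see this constraint. When $C$ evaluates to $0$, the circuit output, $P_0$ and all of $P_1,\dots,P_{k+4}$ stay unsplit, so the intermediate set $F$ of $Y_{k+4}$ is a single cell of $P_\emptyset(N)$.

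However, $\operatorname{Aut}(N)$ acts on $F$ by XOR with flip vectors $v$ satisfying $v_1=v_2$. So $x_1\oplus x_2$ is an orbit invariant on $F$, and both values occur (take $x=0^{k+4}$ and $x=(0,1,1,0,\dots,0)$). Hence $F$ contains at least two orbits and $N$ is not refinable. By \cref{coroll:1-tin-ref}, $N$ is not $1$-Tinhofer, and therefore not $k$-Tinhofer for any $k\ge 1$.

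The same mechanism appears in every OR gadget of value $0$. There $P_i'$ and $P_i''$ always flip together, so the parity of $X_3=(P_i',P_i'',P_k)$ forces the output pair $P_k$ to be fixed pointwise by $\operatorname{Aut}(N)$, even though it is a cell of the stable partition. So your assertion that refinement splits exactly the pairs whose flip status is forced fails before anything is pinned, and no count of pinned constraints can rescue it.

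The paper's own argument for this half is only the one-line claim that the even-flip parity survives $k$ individualizations. It does not address this obstruction either. For $N$ as described, the $k$-Tinhofer half holds only for $k=0$.

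Your explicit witness for the other half is also wrong as written:
\begin{itemize}
\item Individualizing a vertex of $P_m$ splits every constant-$0$ input pair through the matching edges. This splits every gate pair, the output, $P_0$, $P_1$ and $P_2$; the paper's proof records this. So your claim that $P_1,P_2$ stay unsplit is incorrect.
\item After pinning $P_m$ and $P_3,\dots,P_{k+2}$ identically, two pairs $P_{k+3},P_{k+4}$ remain free. Flipping both is consistent with the parities of $X_{k+4}$ and of $X_3=(P_{k+3},P_{k+4},P_m)$, since $P_m$ is unflipped. So choosing different vertices in one of them still gives isomorphic colored graphs.
\item Your round count is off: $1+k$ identical rounds followed by a differing one is $k+2$ rounds, not $k+1$.
\end{itemize}
The paper instead creates the discrepancy first, by individualizing different vertices of $P_m$. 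It then pins further pairs identically, so that the flip of $P_m$ can no longer be compensated. Given the non-refinability above, non-$(k+1)$-Tinhoferness is in fact immediate: individualize two vertices of $F$ with different values of $x_1\oplus x_2$.
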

\begin{proof}
Suppose $C$ evaluates to $0$. We consider two identical copies $N'$ and $N''$ of $N$, and run Tinhofer's algorithm on them. We claim that $N$ is $k$-Tinhofer. After any sequence of $k$ individualizations performed on corresponding color classes of $N'$ and $N''$, the CFI gadgets $X_{3}$ and $X_{k+4}$ still satisfy the even-flip parity property. Hence, the resulting colored graphs remain isomorphic after $k$ steps.   
 To show that $N$ is not $(k + 1)$-Tinhofer, we consider a sequence of individualizations for which the resulting colored graphs are not isomorphic after the ($k+1$)-st step. First, we individualize different vertices of $P_{m}$ in $N'$ and $N''$. The refinement step splits $P_{0}, P_{1}, P_{2}$ and $F$ due to the two parallel edges from the final output gate to the constant $0$ input gate. Next, we individualize the same vertex in $P_{3}$ in both graphs $N'$ and $N''$. The refinement step splits only $F$. We continue similarly for the next $(k-1)$ rounds, i.e., we individualize the same vertex in color classes $P_{4}, \dots, P_{k+4}$. 
At this point, the two colored graphs have the vertices of $P_{3}, P_4, \dots, P_{k+4}$ unflipped, while the vertices of $P_{m}$ are flipped. Hence, by~\cref{lemma:comp-k-not-k+1-tin}, these two colored graphs are not isomorphic. But $N'$ and $N''$ are identical copies of $N$, so we have a contradiction. Thus, $N$ is not $(k + 1)$-Tinhofer. 
\end{proof}

\begin{lemma}
There is a uniform $\ACz$ many-one reduction from \MCVP \, to \ktintest.
\label{lemma:hard}
\end{lemma}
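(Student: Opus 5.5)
The reduction maps an \MCVP\ instance $C$ to the graph $N$ constructed above. I would prove the lemma in two parts: the map $C \mapsto N$ is computable by a uniform $\ACz$ circuit family, and it is a correct many-one reduction.

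\textbf{Correctness.} This follows from \cref{claim:c-1} and \cref{claim:c-0}, but the promise of \ktintest\ has to be checked in both cases.
\begin{itemize}
\item If $C$ evaluates to $1$, then $N$ is discrete by \cref{claim:c-1}. Discrete graphs are Tinhofer, and hence $n$-Tinhofer by \cref{lemma:obs-k-tin}. In particular $N$ is both $k$-Tinhofer and $(k+1)$-Tinhofer, so it satisfies the promise and is a yes-instance.
\item If $C$ evaluates to $0$, then by \cref{claim:c-0} the graph $N$ is $k$-Tinhofer, so the promise holds, and it is not $(k+1)$-Tinhofer, so it is a no-instance.
\end{itemize}
Thus $C$ evaluates to $1$ if and only if $N$ is a yes-instance of \ktintest. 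Both $N$ and the question stay on the promise set in either case.

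\textbf{Uniform $\ACz$ computability.} This is where I expect the main technical work. I would fix an encoding of $C$ as a list of gates, each with its type (constant $0$/$1$, AND, OR) and the indices of its input gates. The vertex set of $N$ is then a disjoint union of blocks indexed by gates:
\begin{itemize}
\item a pair $P_i$ for each gate;
\item the $4$ intermediate vertices of $X_3$ for each AND gate;
\item the two IMP gadgets (two $X_3$'s plus the extra pairs $P_i', P_i''$, etc.) for each OR gate;
\item for each attached output pair, a copy of $Y_{k+4}$ with $2^{k+3}$ intermediate vertices and $k+4$ pairs, plus one $X_3$ on $(P_{k+3},P_{k+4},P_m)$;
\item a single global pair $P_m$.
\end{itemize}
Since $k$ is a fixed constant, each block has constant size. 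So a vertex can be named by a tuple (gate index, local label), where the local label comes from a constant-size set. The colors are assigned in a similar local way. The only gate-dependent choice is that constant-$0$ gates give $a_i, b_i$ equal colors and constant-$1$ gates give them different colors, which is a lookup of the gate type. Each distinct gadget pair also gets a fresh color from a constant-size palette per gate.

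Adjacency between two tuples is then decided by a constant-depth formula. It checks whether the two gate indices are equal, or whether one is a listed input of the other, or whether one vertex lies in $P_m$ and the other in a constant-$0$ gate. It then consults a fixed finite table describing the gadget edges, for example the CFI rule ``intermediate string has bit $i=0$ iff adjacent to $a_i$''. All of these are local lookups and equality tests on $O(\log n)$-bit indices, which are computable in $\mathsf{DLOGTIME}$-uniform $\ACz$. The output size is linear in $|C|$, with constant factor $O(2^{k})$.

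\textbf{Main obstacle.} The step needing most care is the uniformity argument. I have to make sure that no global computation is hidden in the construction, such as computing depths, counting gates, or identifying which gates are ``intermediate outputs''. Where the construction depends on something like the output gate, I would read it directly from the input encoding. I would also note that the parallel edges from $P_m$ to the constant-$0$ pairs are a simple broadcast, implementable by unbounded fan-in. Combined with the $\P$-completeness of \MCVP~\cite{Go77}, this gives \cref{theo:introthmcomptintestphard}.
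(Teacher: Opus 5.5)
Your proposal follows the paper's proof: correctness is read off from Claims~\ref{claim:c-1} and~\ref{claim:c-0}, and $\ACz$-uniformity from the fact that each gate is replaced by a constant-size gadget with local wiring, so, exactly as in the paper, the whole argument rests on those two claims. Your explicit check that the discrete yes-instances satisfy the $k$-Tinhofer promise, and your tuple-indexed description of the circuit computing $N$, fill in details the paper only asserts.
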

\begin{proof}
 Given a monotone Boolean circuit $C$, we construct the graph $N$ as described above. By \cref{claim:c-1}, if $C$ evaluates to $1$, then $N$ is discrete. By \cref{claim:c-0}, if $C$ evaluates to $0$, then $N$ is $k$-Tinhofer but not $(k + 1)$-Tinhofer. Hence, this gives a reduction from \MCVP \, to \ktintest. Finally, the construction of $N$ from $C$ is uniform and local. Each gate is replaced by a constant-size gadget and edges are added according to the wiring of $C$. Hence, the reduction is computable by uniform $\ACz$ circuits.
\end{proof}

\section{Fixed-Parameter Tractability}
\label{sec:fpt}
We now present an algorithm to test isomorphism within the lower levels of the hierarchy (\cref{fig:hierarchy}) in fixed-parameter tractable time parameterized by a new parameter \emph{Tinhofer deficiency}. The \emph{Tinhofer deficiency} of a graph $G$ on $n$ vertices is defined to be the largest integer $k$ for which $G$ is not $(n - k)$-Tinhofer. We present the proof of~\cref{prop:introfptalgoforktin}.

\begin{proof}
Suppose $G$ has Tinhofer deficiency $k$ and $H$ is any graph. We run Tinhofer's algorithm on $G$ and $H$ for $(n-k)$ rounds of individualization and refinement. For the remaining $k$ vertices, we consider all possible mappings between them that respect the colors. If any such mapping corresponds to an isomorphism between $G$ and $H$, then the algorithm outputs $G \cong H$. Otherwise, if none of the mappings yields an isomorphism, then the algorithm outputs $G \ncong H$. Since $G$ has Tinhofer deficiency $k$, it is $(n-k)$-Tinhofer. So, Tinhofer's algorithm does not make an error when ($n-k)$ vertices are individualized. Thus, if $G \cong H$, then there exists a mapping between the remaining $k$ vertices. The total number of possible mappings is $k! \le k^{k} = 2^{k \log k}$. The running time of the algorithm is $2^{k \log k} \cdot n^{\mathcal{O}(1)}$, which is fixed-parameter tractable with respect to the Tinhofer deficiency $k$. If $k$ is constant, then this algorithm runs in polynomial time. 
\end{proof}

\section{Discussions and Conclusion}
\label{sec:conclusion}

In this paper, we studied a new hierarchy of graphs within the class of Tinhofer graphs. We proved that the hierarchy is strict by constructing explicit graph examples separating successive levels of the hierarchy, and we also proved that testing membership in the $(k+1)$-th level of the hierarchy for graphs promised to be in the $k$-th level is $\P$-hard via a reduction from the monotone circuit value problem. In this context, it would be interesting to find more natural families of graphs that separate the levels of the $k$-Tinhofer hierarchy.

We also presented a fixed-parameter tractable algorithm for testing isomorphism within the lower levels of the hierarchy with respect to Tinhofer deficiency as the parameter. It is unclear whether this problem is also kernelizable with respect to Tinhofer deficiency as a parameter. 

Indeed, understanding the exact complexity of the \ktintest problem is also challenging. Note that the problem is at least as hard as isomorphism testing of vertex-transitive graphs when $k=0$ \cite{AKRV17}. It is conceivable that for other values of $k$, the problem is at least as hard as some variants of the graph isomorphism problem. 
\bibliographystyle{alpha}
\bibliography{reference-journal}

@article{CFI92,
  title={An optimal lower bound on the number of variables for graph identifications},
  author={Cai, Jin-Yi and F{\"u}rer, Martin and Immerman, Neil},
  journal={Combinatorica},
  volume={12},
  number={4},
  pages={389--410},
  year={1992}
}

@article{AKRV17,
  title={Graph isomorphism, color refinement, and compactness},
  author={Arvind, Vikraman and K{\"o}bler, Johannes and Rattan, Gaurav and Verbitsky, Oleg},
  journal={Computational Complexity},
  volume={26},
  number={3},
  pages={627--685},
  year={2017}
}

@article{KS09,
  title={Equitable partitions and orbit partitions},
  author={Kudose, Satoru},
  journal={Acta Mathematica Sinica},
  pages={1--9},
  year={2009},
}

@article{F49,
  title={Graphs of degree three with a given abstract group},
  author={Frucht, Robert},
  journal={Canadian Journal of Mathematics},
  volume={1},
  number={4},
  pages={365--378},
  year={1949}
}

@article{KSS21,
  title={Graphs identified by logics with counting},
  author={Kiefer, Sandra and Schweitzer, Pascal and Selman, Erkal},
  journal={ACM Transactions on Computational Logic (TOCL)},
  volume={23},
  number={1},
  pages={1--31},
  year={2021},
  publisher={ACM New York, NY}
}

@inproceedings{Bab16,
  title={Graph isomorphism in quasi-polynomial time},
  author={Babai, L{\'a}szl{\'o}},
  booktitle={Proceedings of the 48th Annual ACM Symposium on Theory of Computing (STOC)},
  pages={684--697},
  year={2016}
}

@article{DGR18,
  title={Lov{\'a}sz Meets {W}eisfeiler and {L}eman},
  author={Dell, Holger and Grohe, Martin and Rattan, Gaurav},
  journal={arXiv preprint arXiv:1802.08876},
  year={2018}
}

@article{Luk82,
  title={Isomorphism of graphs of bounded valence can be tested in polynomial time},
  author={Luks, Eugene M},
  journal={Journal of computer and system sciences},
  volume={25},
  number={1},
  pages={42--65},
  year={1982}
}

@article{WL68,
  title={The reduction of a graph to canonical form and the algebra which appears therein},
  author={Weisfeiler, Boris and Leman, Andrei},
  journal={Nauchno-Technicheskaya Informatsia (NTI) Series},
  volume={2},
  number={9},
  pages={12--16},
  year={1968}
}

@inproceedings{GKMS14,
  title={Dimension reduction via colour refinement},
  author={Grohe, Martin and Kersting, Kristian and Mladenov, Martin and Selman, Erkal},
  booktitle={Proceedings of the 22th Annual European Symposium on Algorithms (ESA)},
  pages={505--516},
  year={2014}
}

@inproceedings{KMGG14,
  title={Power iterated color refinement},
  author={Kersting, Kristian and Mladenov, Martin and Garnett, Roman and Grohe, Martin},
  booktitle={Proceedings of the 28th AAAI Conference on Artificial Intelligence (AAAI)},
pages={1904--1910},
year={2014}
}

@inproceedings{Bab82,
  title={Isomorphism of graphs with bounded eigenvalue multiplicity},
  author={Babai, L{\'a}szl{\'o} and Grigoryev, D Yu and Mount, David M},
  booktitle={Proceedings of the 14th Annual ACM Symposium on Theory of Computing (STOC)},
  pages={310--324},
  year={1982}
}

@inproceedings{HW74,
  title={Linear time algorithm for isomorphism of planar graphs (preliminary report)},
  author={Hopcroft, John E and Wong, Jin-Kue},
  booktitle={Proceedings of the 6th Annual ACM Symposium on Theory of Computing (STOC)},
  pages={172--184},
  year={1974}
}

@article{Bod90,
  title={Polynomial algorithms for graph isomorphism and chromatic index on partial k-trees},
  author={Bodlaender, Hans L},
  journal={Journal of Algorithms},
  volume={11},
  number={4},
  pages={631--643},
  year={1990}
}

@article{Tin91,
  title={A note on compact graphs},
  author={Tinhofer, Gottfried},
  journal={Discrete Applied Mathematics},
  volume={30},
  number={2-3},
  pages={253--264},
  year={1991}
}

@article{Go77,
  title={The monotone and planar circuit value problems are log space complete for {P}},
  author={Goldschlager, Leslie M},
  journal={ACM {SIGACT} {N}ews},
  volume={9},
  number={2},
  pages={25--29},
  year={1977}
}

@phdthesis{Ki20,
  title={Power and limits of the Weisfeiler-Leman algorithm},
  author={Kiefer, Sandra},
  year={2020},
  school={Dissertation, RWTH Aachen University, 2020}
}

@inproceedings{GKMS21,
  title={Color refinement and its applications},
  author={Grohe, Martin and Kersting, Kristian and Mladenov, Martin and Schweitzer, Pascal},
 booktitle={An Introduction to Lifted Probabilistic Inference},
  year={2021},
  publisher={MIT Press}
}

@book{IL90,
  title={Describing graphs: A first-order approach to graph canonization},
  author={Immerman, Neil and Lander, Eric},
  year={1990},
  publisher={Springer}
}

@inproceedings{BG15,
  title={Limitations of algebraic approaches to graph isomorphism testing},
  author={Berkholz, Christoph and Grohe, Martin},
  booktitle={International Colloquium on Automata, Languages, and Programming (ICALP)},
  pages={155--166},
  year={2015}
}

@article{K20,
  title={The {W}eisfeiler-{L}eman algorithm: an exploration of its power},
  author={Kiefer, Sandra},
  journal={ACM SIGLOG News},
  volume={7},
  number={3},
  pages={5--27},
  year={2020}
}

@article{Dvo10,
  title={On recognizing graphs by numbers of homomorphisms},
  author={Dvo{\v{r}}{\'a}k, Zden{\v{e}}k},
  journal={Journal of Graph Theory},
  volume={64},
  number={4},
  pages={330--342},
  year={2010}
}

@inproceedings{KV15,
  title={Universal covers, color refinement, and two-variable counting logic: Lower bounds for the depth},
  author={Krebs, Andreas and Verbitsky, Oleg},
  booktitle={30th Annual ACM/IEEE Symposium on Logic in Computer Science (LICS)},
  pages={689--700},
  year={2015}
}

@article{MB81,
  title={Practical graph isomorphism},
  author={McKay, Brendan D},
  journal={Congressus Numerantium},
  volume={30},
  pages={45-87},
  year={1981}
}

@inproceedings{MRFL19,
  title={Weisfeiler and {L}eman go neural: Higher-order graph neural networks},
  author={Morris, Christopher and Ritzert, Martin and Fey, Matthias and Hamilton, William L and Lenssen, Jan Eric and Rattan, Gaurav and Grohe, Martin},
  booktitle={33rd {AAAI} Conference on Artificial Intelligence (AAAI)},
  volume={33},
  pages={4602--4609},
  year={2019}
}

@inproceedings{ABS21,
  title={A Characterization of Individualization-Refinement Trees},
  author={Anders, Markus and Brachter, Jendrik and Schweitzer, Pascal},
  booktitle={32nd International Symposium on Algorithms and Computation (ISAAC)},
  pages={24--1},
  year={2021}
}

@article{MLM23,
  title={Weisfeiler and {L}eman go machine learning: The story so far},
  author={Morris, Christopher and Lipman, Yaron and Maron, Haggai and Rieck, Bastian and Kriege, Nils M and Grohe, Martin and Fey, Matthias and Borgwardt, Karsten},
  journal={Journal of Machine Learning Research},
  volume={24},
  number={333},
  pages={1--59},
  year={2023}
}

@article{AFKKR22,
  title={The parameterized complexity of fixing number and vertex individualization in graphs},
  author={Arvind, Vikraman and Fuhlbrueck, Frank and Koebler, Johannes and Kuhnert, Sebastian and Rattan, Gaurav},
  journal={ACM Transactions on Computation Theory (TOCT)},
  volume={14},
  number={2},
  pages={1--26},
  year={2022}
}

@article{HHBD17,
  title={Graph isomorphisms in quasi-polynomial time},
  author={Helfgott, Harald Andr{\'e}s and Bajpai, Jitendra and Dona, Daniele},
  journal={arXiv preprint arXiv:1710.04574},
  year={2017}
}

@article{DH17,
  title={Pebble games with algebraic rules},
  author={Dawar, Anuj and Holm, Bjarki},
  journal={Fundamenta Informaticae},
  volume={150},
  number={3-4},
  pages={281--316},
  year={2017},
  publisher={SAGE Publications Sage UK: London, England}
}

\end{document}